\newcommand{\slc}{{\sf Smooth-Label-Cover}\xspace}
\newcommand{\lc}{{\sf Label-Cover}\xspace}
\renewcommand{\E}{\mathbb{E}}
\renewcommand{\R}{\mathbb{R}}
\newcommand{\F}{\mathbb{F}}
\newcommand{\mc}[1]{\ensuremath{\mathcal{#1}}\xspace}
\newcommand{\mb}[1]{\ensuremath{\mathbf{#1}}\xspace}
\newcommand{\tn}[1]{\ensuremath{\textnormal{#1}}\xspace}
\newcommand{\ol}[1]{\ensuremath{\overline{#1}}\xspace}
\newcommand{\bc}{{\mb{c}}}
\newcommand{\llpparity}{{\sf LLP-PARITY}}
\newcommand{\llpor}{{\sf LLP-OR}}
\newcommand{\bx}{{\mathbf{x}}}
\newcommand{\by}{{\mathbf{y}}}
\newcommand{\bz}{{\mathbf{z}}}
\newtheorem{theorem}{Theorem}[section]
\newtheorem{lemma}[theorem]{Lemma}
\newtheorem{definition}[theorem]{Definition}
\newcommand{\vg}[1]{{}}
\newcommand{\rs}[1]{{}}
\title{Hardness of Learning Boolean Functions from Label Proportions\footnote{A conference version of this paper appeared in FSTTCS 2023.}}
\author{Venkatesan Guruswami\thanks{Department of EECS and Simons Institute for the Theory of Computing, University of California, Berkeley. {\tt venkatg@berkeley.edu}. Research supported in part by a Simons Investigator award and NSF grants CCF-2228287 and CCF-2211972.}
 \and Rishi Saket\thanks{Google Research India. {\tt rishisaket@google.com}}}
\date{}
\begin{document}

\maketitle

\begin{abstract}
    In recent years the framework of learning from label proportions (LLP) has been gaining importance in machine learning. In this setting, the training examples are aggregated into subsets or \emph{bags} and only the average label per bag is available for learning an example-level predictor. This generalizes traditional PAC learning which is the special case of unit-sized bags. The computational learning aspects of LLP were studied in recent works \cite{Saket21,Saket22} which showed algorithms and hardness for learning halfspaces in the LLP setting. In this work we focus on the intractability of LLP learning Boolean functions. Our first result shows that given a collection of bags of size at most $2$ which are consistent with an OR function, it is NP-hard to find a CNF of constantly many clauses which \emph{satisfies} any constant-fraction of the bags. This is in contrast with the work of \cite{Saket21} which gave a $(2/5)$-approximation for learning ORs using a halfspace.  Thus, our result provides a separation between constant clause CNFs 
    and halfspaces as hypotheses for LLP learning ORs.

\smallskip    
Next, we prove the hardness of satisfying more than $1/2 + o(1)$ fraction of such bags using a $t$-DNF (i.e. DNF where each term has $\leq t$ literals) for any constant $t$. In usual PAC learning such a hardness was known~\cite{KS08b} only for learning noisy ORs. 
We also study the learnability of parities and show that it is NP-hard to satisfy more than $(q/2^{q-1} + o(1))$-fraction of $q$-sized bags which are consistent with a parity using a parity, while a random parity based algorithm achieves a $(1/2^{q-2})$-approximation.
\end{abstract}

\section{Introduction}
In common machine learning applications, one is required to train a classifier using some training set  of (vectors, label)-pairs to predict the label of vectors sampled from the same (or a similar) distribution as the training set. A typical approach is to optimize the classifier to predict correctly on the training set to ensure that the classifier has good predictive performance over the target distribution. This optimization view is captured by the \emph{probably approximately correct} (PAC) learning framework~\cite{Valiant}.

In setting of \emph{learning from label proportions} (LLP), the training set consists of subsets or \emph{bags} of vectors along with the sum or average of the labels of vectors in each bag. The goal is to train a model to predict the labels for vectors. As before, one would want the model to firstly predict as correctly as possible on the training bags. One measure of such performance is the fraction of \emph{satisfied} bags i.e., those on which the predicted average label matches the given average label i.e., the label proportion. Note that traditional PAC learning is the special case of LLP with only unit-sized bags. 

LLP is motivated by applications in which only the aggregated labels for bags of vectors are available. This may be to preserve the privacy~\cite{R10,WIBB,o2022challenges} of labels, due to lack of instrumentation to obtain labels~~\cite{DNRS} or high labeling costs~\cite{CHR}. 
Other examples of LLP applications have been in medical image classification~\cite{hernandez2018,Bortsova18, Orting16} where small bag sizes -- in the range of 10 to 50 -- are typically more relevant (see Sec 1.2 of \cite{BK21}).

The work of \cite{Saket21} studied LLP from the computational learning perspective on bags of size $\leq 2$. The \emph{LLP learning} goal is the following: given a collection of bags consistent with some function from a target concept class, compute a hypothesis satisfying the most number of bags. With this objective, \cite{Saket21} showed a $(1/2 + o(1))$-factor hardness for LLP learning a halfspace using any function of constantly many halfspaces on bags of size at most $2$. From the algorithmic side on such bags \cite{Saket21} gave a $(2/5)$-factor approximation for LLP learning a halfspace using halfspace, based on rounding a semi-definite programming (SDP) relaxation. Subsequently, \cite{Saket22} proved a strengthened $(4/9 + o(1))$-factor hardness for LLP learning a halfspace using any function of constantly many halfspaces on bags of size at most $2$, a corresponding $(1/q + o(1))$-factor hardness for bags of size at most any constant $q \in \mathbb{Z}^+$, and extended the algorithmic result of \cite{Saket21} showing a $(1/12)$-approximation on bags of size at most $3$.

Since halfspaces capture {\sf OR} formulas,  the algorithmic results of \cite{Saket21,Saket22} apply to learning {\sf OR} formulas using halfspaces. Moreover, the $(1/2 + o(1))$-factor hardness of \cite{Saket21} on bags of size $\leq 2$ also holds for LLP learning an {\sf OR} using any function of constantly many halfspaces. Typically however, one would like to learn an {\sf OR} using an {\sf OR} or similar Boolean functions such as $\ell$-clause {\sf CNF} formulas ({\sf OR} is $1$-clause {\sf CNF}), rather than halfspaces. This raises the following question

\smallskip

\noindent
\emph{Can we achieve constant-factor algorithmic approximations for LLP learning {\sf OR} using {\sf OR} or constant-clause {\sf CNF}?}

\smallskip
\noindent
In our first result, we answer the above question in the negative. 
\begin{theorem} \label{thm:main-1}
    For any constants $\delta > 0, \ell \in \mathbb{Z}^+$, given a collection of bags which are of size at most $2$, and whose label proportions are consistent with some {\sf OR}, it is NP-hard to compute an $\ell$-clause {\sf CNF} that satisfies $\delta$-fraction of the bags. 
\end{theorem}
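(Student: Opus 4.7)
}

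The plan is a long-code style dictatorship reduction from Smooth Label Cover (SLC), combined with a pigeonhole argument that reduces the $\ell$-clause CNF case to the single-OR case. All bags produced by the reduction will have size exactly $2$ and label proportion exactly $1/2$, so that the YES hypothesis $h^{*}$ is a single OR and the NO analysis is driven by bag \emph{separation}, i.e., $h(x) \neq h(y)$.

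Concretely, starting from an SLC instance with vertex sets $U,V$, alphabets $[L],[R]$, and projections $\{\pi_e\}_{e\in E}$, I would introduce Boolean variables $x_{v,i}$ for $v\in V,\, i\in [R]$ and, for a supposed labeling $\sigma$, set the intended hypothesis to the single-clause OR $h^{*}(x) = \bigvee_{v} x_{v,\sigma(v)}$. Each bag is formed by picking a random edge $e=(u,v)$ and sampling a pair $(x,y)$ from a noise distribution on assignments to the $v$-block (zeroed elsewhere) chosen so that, under $h^{*}$, exactly one of $h^{*}(x), h^{*}(y)$ equals $1$; folding over literal-level symmetries handles negations on the hypothesis side. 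Completeness is then immediate by construction.

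For soundness, suppose an $\ell$-clause CNF $\phi = C_1\wedge\cdots\wedge C_\ell$ satisfies a $\delta$-fraction of bags. Since every bag has proportion $1/2$, satisfaction is equivalent to $\phi(x)\neq \phi(y)$, and as $\phi$ is a conjunction this forces some clause $C_j$ to separate the bag (take the side with $\phi = 1$: all clauses accept it, while at least one clause rejects the other point). By pigeonhole, one fixed clause $C_{j^{*}}$ separates at least $\delta/\ell$ fraction of the bags; and since every bag has proportion $1/2$, separation by a single OR is exactly LLP-satisfaction by that single OR. Hence it suffices to prove the $\ell=1$ case for target fraction $\delta_0 := \delta/\ell$, which can be taken arbitrarily small in the SLC reduction. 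For this single-OR case I would write $C = \bigvee_{(v,i)\in T^{+}} x_{v,i} \vee \bigvee_{(v,i)\in T^{-}} \neg x_{v,i}$ and analyze the separation probability using SLC smoothness: when the projection of $T^{+}\cup T^{-}$ to some vertex $v$ is large, $\pi_e$ is injective on it for most edges $e\ni v$, and the noise distribution will then force $C$ to evaluate the same on $x$ and $y$ all but a vanishing fraction of the time. Consequently, a clause that separates many bags must have its support project onto few coordinates per vertex, and a random decoding $\sigma'(v)$ from these few coordinates produces a labeling of the SLC instance with value bounded below by a function of $\delta_0$, contradicting the hardness of SLC.

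The main obstacle is designing the pair-distribution for bags so that it simultaneously satisfies three constraints: (i) the dictator $h^{*}$ produces exactly one $1$ per bag, regardless of the choice of labeling $\sigma$; (ii) any OR clause whose literal set has large projection to some vertex separates only negligibly many bags, using SLC smoothness as the only non-trivial analytic tool; and (iii) the remaining clauses with small projections admit a list-decoding to a good SLC labeling. Negations in clauses are a subtle point, since a negated literal $\neg x_{v,i}$ is active when $x_{v,i}=0$, and the noise distribution must be symmetrized (or folded) so that positive and negative literals are handled uniformly in the separation analysis. I expect the bulk of the technical work to lie in calibrating this noise distribution and the smoothness parameter so that the soundness loss is smaller than $\delta_0=\delta/\ell$ for any prescribed $\delta>0$ and $\ell\in \mathbb{Z}^{+}$.
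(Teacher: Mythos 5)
Your per-clause reduction is sound and essentially matches the paper: since each bag has two points and proportion $1/2$, a satisfying $\ell$-clause CNF forces some single clause to separate the pair, and by averaging a fixed clause separates a $\geq \delta/\ell$ fraction of bags, so it suffices to rule out single ORs separating a small constant fraction. The paper phrases the same thing as a per-clause probability bound plus a union bound.

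The genuine gap is in your bag design. You sample a random edge $e=(u,v)$ and put \emph{both} points of the bag on the single $v$-block, zeroing all other coordinates. Then the evaluation of a clause $C$ on either point depends only on $C$'s support $S_v \subseteq [M]$ restricted to the chosen $v$. In particular, a clause of the form $C = \bigvee_{v \in V} x_{v,\sigma'(v)}$ for a \emph{completely arbitrary} (and in the NO case, necessarily poor) labeling $\sigma'$ has $|S_v|=1$ for every $v$, so $\pi_e$ is trivially injective on it — smoothness gives you nothing — and such a clause separates essentially every bag, since the dictator test cannot tell two single-coordinate ORs apart within a single block. There is no consistency being tested: the bag never sees two $V$-blocks at once, so no decoding can extract a good Label Cover labeling from the separating clause, and your soundness claim fails. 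This is precisely why the paper does \emph{not} use a one-block-per-bag dictatorship test: its bags involve $2t$ vertices $V^x_u\cup V^z_u$ sampled from the neighborhood of a random \emph{left} vertex $u$, with $\bx$ active only on the $V^x_u$ blocks and $\bz$ only on the $V^z_u$ blocks, and with $t$ drawn from a geometric scale $\mathcal{T} = \{2,4,\dots,2^T\}$. A clause then must either be empty on all $2t$ sampled blocks (small $t$, no separation), or dense across many of them (large $t$), in which case either the per-vertex chosen coordinates collide under the projections to $u$ (which is exactly a decodable agreement on the Label Cover side) or they project to distinct labels and the random $J\subseteq[N]$ makes the clause accept both points w.h.p. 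The geometric scale on $t$ is the key technical device that makes this dichotomy hold for each clause with only $O(\log T / T)$ probability of landing in the middle. You would need to redesign your bag distribution along these lines; smooth Label Cover and the folding you mention do not rescue a single-block test for a non-linear predicate like OR.
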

\noindent
The above theorem is proved in Sec. \ref{sec:LLPOR_CNF}. We find the result interesting since it (along with the algorithmic results of \cite{Saket21,Saket22}) proves a separation between constant clause {\sf CNF}s -- in particular {\sf OR}s -- and halfspaces as hypotheses for learning ORs. 

We also study the LLP learnability of {\sf OR} using as hypothesis $\ell$-{\sf DNF} formulas i.e., {\sf DNF} where each term is a conjunction of at most $\ell$ literals. While {\sf OR} is $1$-{\sf DNF}, for $\ell \geq 2$, $\ell$-{\sf DNF}s are not contained in halfspaces, therefore have the possibility of yielding better approximations. However, our second result below (proved in Sec. \ref{sec:LLPOR_ellDNF}) essentially rules out this possibility.
\begin{theorem}\label{thm:main-2}
    For any constants $\delta > 0, \ell \in \mathbb{Z}^+$, given a collection of bags which are of size at most $2$, and whose label proportions are consistent with some {\sf OR}, it is NP-hard to compute an $\ell$-{\sf DNF} that satisfies $(1/2 + \delta)$-fraction of the bags.
\end{theorem}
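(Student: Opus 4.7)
The plan is to reduce from \slc, extending the long-code reduction that underlies Theorem~\ref{thm:main-1}. Start with a \slc instance on $U \cup V$ with projections $\pi_{uv}\colon [R] \to [L]$ and a suitable smoothness parameter. For each vertex $v$ introduce a block of Boolean variables $\{x_v[y] : y \in \{0,1\}^L\}$ forming a long-code table; the intended assignment under a labeling $\sigma$ is $x_v[y] = y_{\sigma(v)}$. The target OR function is the global disjunction of all intended-positive variables across every vertex block; this is the OR that the construction is ``consistent with''.

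The inner verifier produces bags of size exactly $2$, all with label proportion $1/2$. For each edge $(u,v)$, sample long-code test inputs $y_1, y_2 \in \{0,1\}^L$ and $z_1, z_2 \in \{0,1\}^R$ coupled via $\pi_{uv}$ and a small amount of extra ``noise'', set all other vertex blocks to zero so they do not affect the target OR, and form the bag $(x_1, x_2)$ so that the intended OR evaluates to $1$ on exactly one of the two endpoints. By construction every bag is consistent with the intended OR, establishing completeness.

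For soundness, suppose an $\ell$-DNF $h = \bigvee_t T_t$ satisfies more than a $(1/2 + \delta)$ fraction of bags. Since every bag has label $1/2$, satisfaction means $h(x_1) \ne h(x_2)$, so $\Pr[h(x_1) \ne h(x_2)] > 1/2 + \delta$ under the bag distribution. Each term $T_t$ is a conjunction of at most $\ell$ literals, so its Fourier support is confined to subsets of size $\leq \ell$. Using Bonami--Beckner hypercontractivity and influence decoding in the spirit of \cite{KS08b}, one argues that the $\delta$ bias above the random baseline can only arise from a non-negligible collection of terms each of which contains a literal whose underlying long-code coordinate is influential at some dictator position in the corresponding vertex block. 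Collecting these literals yields an $O(1)$-sized candidate list of labels per block, which the smoothness of the Label Cover instance converts into a randomized labeling satisfying a constant fraction of edges, contradicting the hardness of \slc.

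The main obstacle is making the soundness step quantitatively robust. For the constant-clause CNFs of Theorem~\ref{thm:main-1}, the bounded number of clauses localizes the hypothesis to a bounded set of variables, and decoding is essentially combinatorial. For $\ell$-DNFs the number of terms is unbounded, and the bias contribution from any single narrow term is small but can a priori aggregate across many terms supported on unrelated coordinates; only terms aligned with the intended dictator coordinate should actually move the bias. The key is to combine the intrinsic noise present in the label-$1/2$ bag distribution, hypercontractive estimates for $\ell$-wide terms, and the \slc smoothness property to prevent the adversary from ``spreading'' bias over too many non-dictator coordinates, so that the excess $\delta$ must come from genuinely dictator-aligned structure that can be decoded.
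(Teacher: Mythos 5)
Your proposal takes a genuinely different route from the paper, but it has a gap that you yourself flag without resolving, and it is missing the single most important structural idea in the paper's proof.

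The paper does \emph{not} use a long-code encoding or Fourier analysis. It works over the much smaller direct space $\{0,1\}^{V\times[M]}$ (one coordinate per right-vertex/label pair), reduces from standard bipartite \lc rather than \slc, and — crucially — mixes two kinds of bags with equal weight: (i) size-$1$ bags with label proportion $1$, obtained by turning on all $M$ coordinates of a single right vertex $v$, and (ii) size-$2$ bags with label proportion $1/2$, as in the Theorem~\ref{thm:main-1} construction. The size-$1$ bags are what make the whole soundness argument go. First, they are where the $1/2$ baseline in the $(1/2+\delta)$ threshold comes from: the constant-$1$ hypothesis trivially satisfies all of them. Second, and more importantly, they force structure on any DNF $h$ that beats $1/2+\delta$: for a good left vertex $u$, a $\delta$-fraction of its neighbors $v$ must satisfy $h(\ol{\bx}^{(v)})=1$, which means $h$ has a term whose positive literals all lie in $v$'s block and whose negated literals avoid $v$'s block (the paper calls such $v$ \emph{non-empty}). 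With a $\delta$-fraction of non-empty $v$ guaranteed, the size-$2$ bag analysis becomes elementary: for a good $u$, with high probability both $\bx$ and $\bz$ contain enough non-empty vertices that some aligned term fires on each, so $h(\bx)=h(\bz)=1$ and the bag is not satisfied — no hypercontractivity or influence decoding is needed. Your proposal has only size-$2$ bags of label $1/2$, so there is nothing that forces a $\delta$-fraction of vertex blocks to carry an aligned term; and without that, there is no obvious reason a $(1/2+\delta)$ threshold (as opposed to, say, the $\delta$ threshold of Theorem~\ref{thm:main-1}) is the right target for your distribution.

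On the soundness step itself: you correctly identify the central obstacle — that a wide DNF can have unboundedly many narrow terms, each contributing a tiny bias, and these contributions could in principle aggregate across coordinates that have nothing to do with any dictator. But you leave this unresolved, describing it only as something ``the key is to combine'' noise, hypercontractivity, and smoothness to handle. That is precisely the step that requires a proof, and it is exactly the step the paper's size-$1$-bag trick sidesteps entirely. It is plausible that a long-code/Fourier route in the style of \cite{KS08b} can be made to work, but as written your argument does not close the gap, and the construction is missing the component that, in the paper, both supplies the $1/2$ term in the threshold and converts the DNF's structure into a decodable labeling. I'd encourage you to look at why the paper adds the size-$1$ bags with label proportion $1$ and how ``non-empty'' vertices are used; it replaces the entire Fourier-analytic machinery you are proposing with a Chernoff bound.
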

Note that while the hardness factor achieved above is weaker than that of Theorem \ref{thm:main-1},
no inapproximability is known for the real analogue of Theorem \ref{thm:main-2} i.e., LLP learning halfspaces using \emph{polynomial thresholds}.

While the works of \cite{Saket21,Saket22} studied the LLP learnability of halfspaces, the corresponding problem over finite fields has not been studied. In particular, the $\F_2$-version of this problem is equivalent to the LLP learnability of parities using parities over the Boolean domain. Parities are a fundamental class of Boolean functions which makes this problem of significant interest as well. Our next result however, shows that this is hard to approximate, with the inapproximability growing exponentially as the bag size increases.
\begin{theorem}\label{thm:main-3}
    For any constants $\delta > 0, q \in \mathbb{Z}^+$ ($q \geq 2$), given a collection of bags which are of size at most $q$, and whose label proportions are consistent with some parity, it is NP-hard to compute a parity that satisfies $(q/2^{q-1} + \delta)$-fraction of the bags.
\end{theorem}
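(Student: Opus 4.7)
The plan is to reduce from the \slc problem with appropriate parameters, using a long-code encoding of labelings as parity functions. For each vertex $v$ of the \slc instance (with label set $[T]$), I would introduce one Boolean variable $X_{v,j}$ per label $j \in [T]$. A parity hypothesis $f_S$ then factors into a subset $S_v \subseteq [T]$ at each vertex, with the intended ``dictator'' setting $S_v = \{\ell_v\}$ encoding a labeling. Inputs take the form $x^{(v,z)}$ with $x^{(v,z)}_{(v',j)} = [v'=v]\, z_j$, so that $f_S(x^{(v,z)}) = \chi_{S_v}(z) := \sum_{j \in S_v} z_j \bmod 2$ realizes a long-code evaluation on $z \in \{0,1\}^T$.

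For each \slc edge $(u,v)$ with projection $\pi_{uv}$, I would construct a family of bags of size $q$ by jointly sampling long-code queries (possibly spanning both $u$ and $v$ via pullback under $\pi_{uv}$) so as to ensure, first, \emph{completeness}: the dictator parity corresponding to a satisfying \slc labeling matches each bag's declared label proportion; and second, \emph{soundness}: any parity hypothesis whose restrictions do not encode a consistent labeling satisfies at most a $q/2^{q-1}+\delta$ fraction of bags. The specific threshold $q/2^{q-1}$ reflects the combinatorics of $q$-bit integer sums: for a uniformly random $y \in \{0,1\}^q$, the probability that $\sum_i y_i \in \{1, q-1\}$ is $2q/2^q = q/2^{q-1}$, and the gadget is arranged so that, up to lower-order error, this is the best an adversarial parity can achieve on a typical bag, exploiting the symmetry between the complementary targets $1$ and $q-1$.

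Soundness is proved via Fourier analysis of the bag-satisfaction indicator over the randomness of the long-code queries. Expanding the indicator $\mathbf{1}[\sum_i y_i = s]$ in characters $\chi_T$, the weight-$0$ and weight-$1$ terms furnish the $q/2^{q-1}$ contribution, while higher-weight characters are controlled using the smoothness of \slc---projections rarely collide on small sets. Any parity beating $q/2^{q-1}+\delta$ must therefore have nontrivial Fourier mass on singletons at some vertex, from which a randomized labeling for \slc can be extracted satisfying more than the gap fraction of edges, contradicting the \slc hardness. The main obstacle is in designing the bag distribution so that the soundness precisely attains $q/2^{q-1}$: because bags with true sum in $\{1,\ldots,q-1\}$ do not reduce to $\mathbb{F}_2$-linear constraints, the indicator has a nontrivial Fourier expansion at every level, and bounding it requires careful combinatorial bookkeeping---essentially a Vandermonde-style count of $q$-bit patterns with prescribed integer sum, combined with smoothness to discard the large-support contributions.
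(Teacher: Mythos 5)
Your high-level architecture (long-code encoding over Smooth Label Cover, dictatorship test per vertex, Fourier analysis of the bag-satisfaction event) matches the paper's, but several of the load-bearing technical ideas are either missing or wrong, and filling them in is the whole game.

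First, you never specify the dictatorship test distribution, and the one the paper uses is the crux: for each coordinate $i \in [M]$, independently choose one of the $q$ points in the bag to have a $1$ in that coordinate and set the others to $0$. This has the consequence that for \emph{any} parity $h$ of support size $K$, the XOR $\bigoplus_{j=1}^q h(\bx^{(j)})$ is a \emph{deterministic} constant $\psi^* = q\cdot c_0 \oplus K \pmod 2$. That pins the label vector $(Z_1,\dots,Z_q)$ to a fixed-parity coset of $\F_2^q$ (of size $2^{q-1}$), and then $q/2^{q-1}$ falls out as the fraction of weight-$1$ vectors in that coset. Your alternative explanation, that $q/2^{q-1}$ arises from $\Pr[\mathrm{wt}(y)\in\{1,q-1\}] = 2q/2^q$ over uniform $y \in \{0,1\}^q$, gives the same number by numerical coincidence but is the wrong combinatorics: the label-proportion constraint $\sigma = 1/q$ forces Hamming weight exactly $1$, not $1$-or-$(q-1)$. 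Without the fixed-parity observation you have no reason the adversary cannot do better than $q/2^{q-1}$.

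Second, you attribute the control of high-weight Fourier terms to the smoothness of \slc, but that is not where smoothness is used. In the paper, for every proper non-empty $S \subsetneq [q]$ the bias $|\Pr[\bigoplus_{j\in S}Z_j=0]-\tfrac12|$ decays like $\exp(-2K/q)$ purely because the coordinates of the dictatorship test are iid (a Vazirani-XOR-lemma style argument); this has nothing to do with the \slc projections. Smoothness is invoked in the outer reduction, to ensure that the projections $\pi_{ev}$ are injective on the small candidate sets $S_v$, so that the edge constraints yield non-empty intersections $\pi_{ev_1}(S_{v_1}) \cap \pi_{ev_2}(S_{v_2})$ and hence a good decoded labeling.

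Third, your plan to build bags that ``span both $u$ and $v$ via pullback'' is the harder route, and the paper avoids it with \emph{folding}: the test is run entirely at one random vertex, and consistency across edges is enforced by restricting the ambient space to the subspace $H$ of points satisfying the homogeneous linear constraints $\bigoplus_{i\in\pi_{ev_1}^{-1}(j)}\hat x_{v_1,i} = \bigoplus_{i\in\pi_{ev_2}^{-1}(j)}\hat x_{v_2,i}$, which forces any parity hypothesis to have coefficient vector $\bc \in H$. Folding is natural for $\F_2$-linear hypotheses and lets the single-vertex dictatorship analysis drop in cleanly; a two-vertex test would require redoing the soundness with correlated queries and would not obviously give the sharp $q/2^{q-1}$ threshold. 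So the proposal, while pointed in the right direction, has a genuine gap at each of the three places the paper's argument actually bites.
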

The above is proved in Section~\ref{sec:paritieshardness}. The hardness factor is asymptotically close (for large $q$) to the following $(1/2^{q-2})$-approximation for this problem described in Sec. \ref{sec:algoparities}.
\begin{theorem}\label{thm:main-algo}
    For any $q \in \mathbb{Z}^+$, $q \geq 2$, given a collection of bags which are of size at most $q$, and whose label proportions are consistent with some parity, there is a randomized polynomial time algorithm that satisfies $(1/2^{q-2})$-fraction of the bags in expectation. 
\end{theorem}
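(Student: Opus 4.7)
The plan is to design a randomized polynomial-time algorithm based on solving a carefully augmented $\F_2$-linear system in the unknowns $T\in\F_2^n$. For each bag $B_i=\{x^{(j)}_i\}_{j=1}^{k_i}$ with integer sum $s_i$, I would distinguish two cases: if $s_i\in\{0,k_i\}$ (call $B_i$ \emph{extremal}), then $s_i$ determines $\chi_S$ identically on every vector of the bag, so include the $k_i$ ``hard'' equations $\chi_T(x^{(j)}_i)=s_i/k_i$ for $j=1,\ldots,k_i$; otherwise $s_i\in\{1,\ldots,k_i-1\}$ and include only the single mod-$2$ equation $\chi_T(\sum_j x^{(j)}_i)\equiv s_i\pmod 2$. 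Call the resulting system $L$; since the unknown parity $S$ satisfies every equation, $L$ is feasible, and Gaussian elimination yields in polynomial time a particular solution $T^{\star\star}$ together with a basis for the null space $N_L$. The algorithm outputs $T=T^{\star\star}+R$ with $R$ drawn uniformly at random from $N_L$.

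I would then establish the per-bag bound $\Pr[T\text{ satisfies }B_i]\geq 1/2^{q-2}$. For extremal $B_i$, the hard equations force $\chi_T$ to the correct value on every vector of $B_i$ for every feasible $T$, so $B_i$ is satisfied with probability $1$. For intermediate $B_i$, let $v_T:=(\chi_T(x^{(j)}_i))_{j=1}^{k_i}\in\F_2^{k_i}$. Since the map $T\mapsto v_T$ is $\F_2$-linear and $R$ is uniform on the linear subspace $N_L$, the vector $v_T$ is uniformly distributed on an affine subspace $V_i^L\subseteq\F_2^{k_i}$. By construction $V_i^L$ is contained in the ``mod-$2$ slice'' $\{v:\sum_j v_j\equiv s_i\pmod 2\}$ (which has dimension $k_i-1$), and since $S\in T^{\star\star}+N_L$, the subspace $V_i^L$ contains the vector $v_S$ of Hamming weight exactly $s_i$.

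The quantitative heart of the argument is lower-bounding the density of weight-$s_i$ vectors in $V_i^L$. Let $d_i:=\dim V_i^L\leq k_i-1$. If $d_i=k_i-1$, then the inclusion into the mod-$2$ slice is an equality by a dimension count, so $V_i^L$ contains exactly $\binom{k_i}{s_i}$ weight-$s_i$ vectors out of $2^{k_i-1}$; since $\binom{k_i}{s_i}\geq k_i\geq 2$ for $s_i\in\{1,\ldots,k_i-1\}$, the density is at least $2/2^{k_i-1}=1/2^{k_i-2}\geq 1/2^{q-2}$. If instead $d_i\leq k_i-2$, then $|V_i^L|=2^{d_i}\leq 2^{q-2}$ and $V_i^L$ still contains the weight-$s_i$ vector $v_S$, so the density is at least $1/|V_i^L|\geq 1/2^{q-2}$. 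Averaging the per-bag bound over all bags yields the claimed expected satisfaction fraction.

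The main subtlety I expect will merit care is justifying that the marginal of $v_T$ is indeed uniform over $V_i^L$, which follows because $T\mapsto v_T$ is $\F_2$-linear and $R$ is uniform on the linear subspace $N_L$, so every element of $V_i^L$ has a preimage coset of equal size. A complementary point worth flagging is the necessity of including the stronger \emph{hard} equations for extremal bags: keeping only the mod-$2$ equation would satisfy an extremal bag of generic rank with probability only $1/2^{k_i-1}$, falling short of $1/2^{q-2}$ by a factor of $2$ when $k_i=q$. This is exactly the source of the ``$-2$'' in the exponent: two extra bits of success probability are bought, one by the mod-$2$ constraint and one by replacing it with a hard equation on extremal bags.
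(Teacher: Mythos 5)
Your proof is correct and follows essentially the same route as the paper: you build the same $\F_2$-linear system (pointwise constraints from the monochromatic/extremal bags plus a single parity constraint per bag), solve by Gaussian elimination, draw a uniformly random solution, and bound per-bag success probability by a dichotomy on the dimension of the induced affine subspace of label patterns inside the mod-$2$ slice of $\F_2^{|B|}$. Your case split on $\dim V_i^L$ is logically identical to the paper's split on whether $F_B$ contains all weight-$t\sigma$ vectors, and your explicit justification that $v_T$ is uniform over $V_i^L$ (equal-size preimage cosets under a linear map) is a welcome clarification of a step the paper leaves implicit.
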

Note that, when $q=1$ i.e., all bags are of size $1$, one can deterministically satisfy all bags using Gaussian elimination.

\subsection{Previous Related Work}
The formalization of the LLP framework was first done in the work of \cite{YCKJC14} who proved generalization error bounds for classifiers for any distribution over $($bag, label-proportion$)$-pairs, though their bag-level objective was a relaxed notion -- useful for studying LLP with large bag sizes -- of the strict bag satisfaction used in \cite{Saket21,Saket22} and our work. Related recent works  \cite{busafekete2023easy,chen2023learning} have shown bag-to-instance classification generalization error bounds. The study of LLP learnability of specific function classes has nevertheless been fairly sparse, apart from the works of \cite{Saket21,Saket22} whose contributions have been described earlier in this section.

The learnability of small Boolean formulas has been extensively studied in traditional PAC learning. It is well known that an {\sf OR} can be efficiently learnt by an {\sf OR} up to arbitrary accuracy. On the other hand, \cite{KS08b} proved a $(1/2 + o(1))$-factor hardness for learning a $2$-clause {\sf CNF} using constant clause {\sf CNF}, and the same hardness factor for learning a noisy {\sf OR} using $\ell$-{\sf DNF} for any constant $\ell$. The work of \cite{FGRW12} proved the same hardness for learning noisy {\sf OR} with a halfspace as hypothesis.  These results were further generalized by \cite{GS19} who proved the same hardness factors for learning $2$-clause {\sf CNF} and noisy {\sf OR} using any function of constantly many halfspaces as hypothesis. 
Similar to {\sf OR}, parities can also be efficiently learnt by parities using Gaussian elimination over $\F_2$. On the other hand, the $(1/2 + o(1))$-factor hardness for noisy {\sc Max}-$3$-{\sc Lin} by \cite{Hastad} implies the same hardness factor for learning a noisy parity using a parity. 
Note that all the $(1/2 + o(1))$-factor hardness results are tight since one of the constant $0$ or $1$ functions trivially obtain $(1/2)$-approximation for learning Boolean valued functions. However, this trivial threshold does not hold in the LLP setting since the constant functions are not guaranteed to satisfy even one bag.

The above hardness results carry over to the LLP setting for the special case when all bags are unit-sized. However, we prove hardness of approximating the problems of LLP learning OR and parity \emph{without} any noise, which are tractable in the usual PAC case, thereby showing a qualitative difference between the LLP and PAC settings.

\subsection{Overview of Our Techniques}
{\bf Proof of Theorem \ref{thm:main-1}.} Our reduction is from bipartite Label-Cover~\cite{ABSS97} with $N$ and $M$ as the sizes of the smaller and larger label sets respectively,  and is similar to that of \cite{KS08b} for the hardness of learning noisy {\sf OR}. The high-level approach is to have one coordinate for each vertex-label pair on the larger (right) side of the Label-Cover instance, i.e. the variables are $x_{v,i}$ for $v \in V$ and $i \in [M]$. Fix a random sample of $2t$ vertices $\{\hat{v}_1, \dots, \hat{v}_t, \tilde{v}_1, \dots, \tilde{v}_t\}$ from a neighborhood of a left vertex $u$. For simplicity assume that the projection constraints between $u$ and each of the $2t$ vertices are the same i.e, for each label $j \in [N]$ for $u$ there is a subset $S_j\subseteq[M]$ such that assigning any of the $2t$ vertices with a label from $S_j$ satisfies that edge.

A  $2$-sized bag with label proportion $1/2$ is sampled by letting $J \subseteq [N]$ be a random subset, and for the first point $\bx$ setting only the coordinates $\{x_{\hat{v}_r, i}\,\mid\, \pi_{\hat{v}_ru}(i) \in J, r\in [t]\}$ to be $1$, and for the second point $\bz$ only the coordinates  $\{z_{\tilde{v}_r, i}\,\mid\, \pi_{\tilde{v}_ru}(i) \in \ol{J}, r\in [t]\}$ to be $1$. It is easy to see in the YES case that an {\sf OR} of exactly the coordinates $(v,\rho(v))$ -- where $\rho$ is the satisfying labeling -- for each right vertex $v$, satisfies all such bags. For the NO case, we illustrate the analysis of an {\sf OR} formula $\mc{C}$ which satisfies some constant fraction of the bags. From the $o(1)$-Hamming weight of the points, one can assume that $\mc{C}$ has no negated literals. If $\mc{C}$ has no coordinates of the $2t$ vertices (empty case)  then the bag is anyway not satisfied as $\mc{C}$ evaluates to $0$ on both points. 

On the other hand, if a sufficiently large number of these vertices have a corresponding variable in $\mc{C}$ (dense case), then elementary probabilistic arguments yield at least two vertices among the $2t$ which have their pre-decided distinguished variables in $\mc{C}$
with the same projection, leading to a good randomized labeling to the Label-Cover. A key idea for ensuring that this analysis goes through is to sample $t$ u.a.r. from $\{1,\dots, 2^T\}$ for some large $T$ so that for each $u$ and most values of $t$, nearly all samples of the $2t$ vertices are either the empty case or the dense case.

\medskip
\noindent
{\bf Proof of Theorem \ref{thm:main-2}.} The hardness reduction is similar to the above, except that we need to ensure that a significant fraction of vertices have at least one term in which all the positive literals correspond to its coordinates, while none of the negated literals do. Thereafter a similar analysis as the previous case goes through. However, for ensuring this property we introduce an additional distribution over bags of size $1$ and label proportion $1$, essentially saying that for each vertex the point with all its coordinates to $1$ and the rest to $0$ should be $1$-labeled. Due to this we obtain a $(1/2 + o(1))$-factor hardness in this case.

\medskip
\noindent
{\bf Proof of Theorem \ref{thm:main-3}.} While the reductions above create points in a bag whose active coordinates span the edges of the label-cover, in the parity case we can add homogeneous $\F_2$-linear \emph{folding} constraints which ensure consistency of labels across edges via a reduction from the non-bipartite \emph{Smooth} Label-Cover~\cite{GRSW}. It is sufficient to then describe a \emph{dictatorship test} (see Chap. 7 of \cite{odonnell}) on the $M$ coordinates of a single vertex. Our dictatorship test is a distribution over $q$-sized bags, i.e., $q$ points in the hypercube $\F_2^M$, sampled as follows: independently for each $i \in M$, set exactly one of the $q$ points to $1$ in the $i$'th coordinate and the rest to $0$. All these bags have target label proportion $1/q$ which is satisfied by any dictator, i.e., parity given by a single coordinate. On the other hand, any parity over a much larger number $K$ of coordinates will induce a near-uniform distribution over the  $q$-sized vector of labeling to the points of a random bag. 
In fact, this is  close to the uniform distribution over the points of $\F_2^q$ with even or odd (depending only on $q$ and $K$) number of non-zero coordinates. It is then easy to see that such a distribution will satisfy the $1/q$ label proportion of the bags with probability $\approx q/2^{q-1}$. 

The algorithm for this problem first does Gaussian elimination for all the linear constraints given by bags of label proportion $\{0,1\}$ and obeying the bag-level parity constraints for the remaining bags. It then chooses a random parity from the remaining coordinates. We show that this yields an $1/2^{q-2}$ approximation.

\section{Preliminaries}
\subsection{Problem Definitions}

Consider the space $\{0,1\}^d$ for some $d \in \mathbb{Z}^+$ and some function $f : \{0,1\}^d \to \{0,1\}$. For $B \subseteq \{0,1\}^d$, define $\sigma(B, f) := \left|\left\{ \bx \in B\,\mid\, f(\bx) = 1 \right\}\right|/|B|$ to be the corresponding label proportion.

An instance $\mc{I}$ of \llpor$[q]$ is given by a collection $\mc{B} := \{(B_j, \sigma_j)\}_{j=1}^m$ of bags and their label proportions where each bag is of size at most $q$. 
The goal is to find an {\sf OR} function $h(\bx)$ which satisfies the most bags of $\mc{B}$ i.e., maximize $\left|\left\{j\in [m]\,\mid\, \sigma_j = \sigma(B_j, h)\right\}\right|$. 

An instance $\mc{I}$ of \llpparity$[q]$ is similar to the above except that the 
goal accordingly is to compute a parity maximizing the number of satisfied bags. Since the XOR is simply an addition over $\F_2$ we shall think of the Boolean values as elements of $\F_2$ in this case.

\subsection{Label Cover}

\begin{definition}\label{def-LC}
An instance of $\mc{L}$ of  \lc is given by $(G(U, V, E \subseteq V\times U), M, N, \{\pi_{vu} : [M] \to [N]\}_{e = (v,u) \in E})$ where $G(U, V, E)$ is a bi-regular bipartite graph. A labeling $\rho$ assigning labels from $[N]$ to $U$ and $[M]$ to $V$ satisfies an edge $(u, v)$ iff $\pi_{vu}(\rho(v)) = \rho(u)$. The goal is to find a labeling satisfying the most number of edges.
\end{definition}

The following well-known inapproximability of \lc follows from the the PCP Theorem~\cite{AS,ALMSS} along with the Parallel Repetition Theorem~\cite{Raz}.
\begin{theorem}\label{thm:lc-hardness}
    For any constant $\xi > 0$ there exist $M$ and $N$ such that it is NP-hard, given an \lc instance $\mc{L}(U, V, E, M, N, \{\pi_{vu}\}_{e = (v,u) \in E})$ to distinguish between: (i) \tn{YES case:} there is labeling that satisfies all edges in $E$, or (ii) \tn{NO case:} Any labeling satisfies at most $\xi$ fraction of the edges of $E$.
\end{theorem}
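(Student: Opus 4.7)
The plan is to derive the stated gap from two classical black-box results: the PCP Theorem and Raz's Parallel Repetition Theorem, composed through the standard clause-vs-variable two-prover one-round (2P1R) game. I will not attempt to reprove either ingredient; the whole point is that the \lc hardness follows by assembling them in the usual way.

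\textbf{Step 1: Start from a constant-gap PCP.} By the PCP Theorem \cite{AS,ALMSS}, there exists an absolute constant $\eps_0>0$ such that deciding whether a 3CNF formula $\varphi$ is satisfiable or whether every assignment falsifies at least an $\eps_0$ fraction of its clauses is NP-hard. By a standard expander-replacement/regularization reduction, one may additionally assume that each variable of $\varphi$ appears in the same constant number of clauses (this preserves the gap up to constants and will supply bi-regularity below).

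\textbf{Step 2: Build a basic 2P1R / projection \lc instance $\mc{L}_0$.} Let $V$ be the set of clauses and $U$ the set of variables of $\varphi$; put an edge $(v,u)\in V\times U$ whenever variable $u$ appears in clause $v$. The label set for $v$ is the collection of $7$ satisfying assignments of clause $v$, so $|[M_0]|\le 7$; the label set for $u$ is $\{0,1\}$, so $|[N_0]|=2$. The projection $\pi_{vu}:[M_0]\to[N_0]$ reads off the value that the clause-assignment gives to $u$. The regularity property from Step~1 makes the bipartite graph bi-regular. Completeness is $1$: a satisfying assignment to $\varphi$ induces a labeling satisfying every edge. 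Soundness follows from a standard averaging argument: if a labeling $(\rho_V,\rho_U)$ satisfies a $(1-\eta)$ fraction of edges, then by bi-regularity at least a $(1-3\eta)$ fraction of clauses have the property that $\rho_V$ agrees with $\rho_U$ on all three of their variables, and on each such clause $\rho_V$ is forced to be a satisfying assignment that also agrees with $\rho_U$; hence $\rho_U$ satisfies a $(1-3\eta)$ fraction of the clauses of $\varphi$. Combined with the PCP gap, any labeling satisfies at most a $(1-\eps_0/3)$ fraction of edges of $\mc{L}_0$.

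\textbf{Step 3: Amplify by parallel repetition.} Apply Raz's Parallel Repetition Theorem \cite{Raz} to the projection game $\mc{L}_0$: for every $k\ge 1$, the $k$-fold repetition $\mc{L}_0^{\otimes k}$ is again a projection game on a bi-regular bipartite graph (the tensor power of the original graph), with label sets $[M]:=[M_0]^k$ and $[N]:=[N_0]^k$, completeness $1$, and soundness at most $\bigl(1-(\eps_0/3)^{c}\bigr)^{\Omega(k)}$ for an absolute constant $c>0$. Given the target $\xi>0$, choose $k=k(\xi,\eps_0)$ large enough so this soundness is below $\xi$; then $M=M_0^k$ and $N=N_0^k$ are constants depending only on $\xi$, and $\mc{L}:=\mc{L}_0^{\otimes k}$ has exactly the YES/NO gap required by the theorem.

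\textbf{Main obstacle.} Nothing in the above sketch is original or delicate at the level of this paper; the genuine difficulty is entirely inside Raz's theorem, which is invoked as a black box. Once that is in hand, the only thing to verify is that parallel repetition preserves the projection and bi-regularity structure of $\mc{L}_0$ (immediate from the coordinate-wise definition of $\pi_{vu}$ on the tensor game), so that the output instance still conforms to Definition~\ref{def-LC}.
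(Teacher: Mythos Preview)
Your proposal is correct and follows exactly the approach the paper indicates: the paper does not give its own proof of this theorem but simply states that it ``follows from the PCP Theorem~\cite{AS,ALMSS} along with the Parallel Repetition Theorem~\cite{Raz}.'' Your sketch (gap-3SAT from PCP, the clause-vs-variable projection game, then Raz's parallel repetition to drive the soundness below $\xi$) is precisely the standard derivation alluded to, so you have supplied more detail than the paper itself.
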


\subsection{Smooth Label Cover}

\noindent
Unlike the standard bipartite version in Definition \ref{def-LC} we also use a non-bipartite version with useful structural properties defined below.
\begin{definition}\label{def-SLC}
An instance of \slc
$\mc{L}(G(V,E),N,M,\{\pi_{ev}\,\mid\,e\in E, v\in e\})$ consists of a regular
connected (undirected)
graph $G(V,E)$
with vertex set $V$ and edge set $E$.  Every edge
$e = (v_1,v_2)$ is associated with  projection functions
$\{\pi_{ev_i}\}_{i=1}^{2}$ where $\pi_{ev_i}: [M] \to [N]$.
A vertex labeling is a mapping defined on $\rho : V\to [M]$. A
labeling $\rho$ satisfies edge $e = (v_1, v_2)$ if
$\pi_{ev_1}(\rho(v_1)) = \pi_{ev_2}(\rho(v_2))$. The goal is to
find a labeling which satisfies the maximum number of edges.
\end{definition}
The following theorem is proved in Appendix A of \cite{GRSW}. 
\begin{theorem}\label{thm:slc-hardness}
There exists a constant $c_0 > 0$ such that for any constant
integer parameters $Q,R\geq 1$,
it is {\rm NP}-hard to distinguish between the following
two cases for a \textnormal{Smooth Label Cover} instance
$\mc{L}(G(V,E),N,M,\{\pi_{ev}\,\mid\,
e\in E, v\in e\})$ with $M = 7^{(Q+1)R}$ and $N= 2^{R}7^{QR}$:
\begin{itemize}
	\item \textnormal{(YES Case)} There is a labeling that satisfies every
edge.
\item \textnormal{(NO Case)}  Every labeling satisfies less than a fraction $2^{-c_0 R}$ of the edges.
\end{itemize}
In addition, the instance
$\mc{L}$ satisfies the following properties:
\begin{itemize}
	\item \textnormal{(Smoothness)} For any vertex $w \in V$, $ \forall
i,j\in [M],\ i\neq j, \ \ \Pr_{e\sim w} \left[\pi_{ew}(i) = \pi_{ew}(j)\right] \leq
1/Q,$ where the probability is over a randomly chosen edge incident
on $w$.
\item \textnormal{(Weak Expansion)} For any $\delta > 0$,  let $V'\subseteq V$ and
$|V'| = \delta\cdot |V|$, then the number of edges among the vertices
in $|V'|$ is at least $\delta^2|E|$.
\end{itemize}
\end{theorem}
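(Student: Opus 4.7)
The plan is to construct the Smooth Label Cover instance via the standard pipeline: begin from a PCP-based hardness for a base problem, apply parallel-repetition-with-redundancy to introduce smoothness, and arrange the edge distribution symmetrically to yield weak expansion. I would start from the NP-hardness of gap Max-3-SAT on bounded-occurrence instances, with completeness $1$ and soundness $1 - \eta$ for some constant $\eta > 0$, as given by the PCP Theorem~\cite{AS,ALMSS}; each $3$-clause has $7$ locally satisfying assignments, which accounts for the base $7$ in the stated parameters. The vertex set $V$ consists of ordered $(Q+1)R$-tuples of clauses; a label in $[M] = [7^{(Q+1)R}]$ specifies a satisfying assignment per clause in the tuple. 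Edges are drawn by first sampling a uniformly random $QR$-sized sub-tuple $T$ of clauses together with $R$ auxiliary variable values (contributing the $2^R$ factor), and then sampling both endpoints $v_1, v_2$ i.i.d.\ uniformly among vertices whose tuple is consistent with $T$ and the auxiliaries. The projection $\pi_{ev_i}$ restricts $v_i$'s label to the shared $QR$ positions and reads off the auxiliary variables, matching $N = 2^R \cdot 7^{QR}$.

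\textbf{Smoothness, completeness, and soundness.} Smoothness is a one-line calculation: two distinct labels $i, j \in [M]$ for a vertex $w$ must differ on at least one of the $(Q+1)R$ clause positions, and a random incident edge drops any specific position from the shared sub-tuple with probability exactly $R/((Q+1)R) = 1/(Q+1) \leq 1/Q$, so $\Pr_e[\pi_{ew}(i) = \pi_{ew}(j)] \leq 1/Q$. Perfect completeness is immediate since a globally satisfying 3-SAT assignment induces consistent local labels on every tuple and hence satisfies every edge. Soundness follows from Raz's Parallel Repetition Theorem~\cite{Raz} applied to the underlying 3-SAT-to-Label-Cover reduction: any labeling satisfying more than a $2^{-c_0 R}$-fraction of edges would yield, via standard random-decoding, a 3-SAT assignment satisfying more than a $1-\eta$-fraction of clauses, contradicting the base hardness. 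Here $c_0 > 0$ depends only on $\eta$.

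\textbf{Weak expansion.} The i.i.d.\ endpoint sampling is the key structural property for expansion. Fix $V' \subseteq V$ with $|V'| = \delta |V|$, and for a random edge let $\mathcal{S}$ denote the shared data $(T, \text{auxiliaries})$, setting $p_{\mathcal{S}} := \Pr[v \in V' \mid \mathcal{S}]$. Since $v_1, v_2$ are i.i.d.\ conditioned on $\mathcal{S}$, Jensen's inequality gives $\Pr[v_1 \in V',\, v_2 \in V'] = \E_{\mathcal{S}}[p_{\mathcal{S}}^2] \geq \bigl(\E_{\mathcal{S}}[p_{\mathcal{S}}]\bigr)^2 = \delta^2$, which is exactly the required bound on the internal edge count.

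\textbf{Main obstacle.} The hardest part is engineering the shared data $\mathcal{S}$ so that three requirements hold simultaneously: (i) conditioning on $\mathcal{S}$ genuinely makes $v_1, v_2$ i.i.d.\ (for weak expansion), (ii) $\mathcal{S}$ loses each coordinate of a label with probability $\leq 1/Q$ under the projection (for smoothness), and (iii) $\mathcal{S}$ still encodes a valid Label Cover projection amenable to Raz-style parallel-repetition soundness analysis. Decoupling these three constraints — at the price of the extra $2^R$ factor in $N$ from the auxiliary variables, and the symmetric rather than one-sided edge sampling — is the technical heart of the construction in~\cite{GRSW}; the remaining verification is routine bookkeeping given standard PCP machinery.
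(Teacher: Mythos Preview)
The paper does not give its own proof of this theorem: immediately before the statement it says ``The following theorem is proved in Appendix~A of~\cite{GRSW},'' and there is no further argument in the paper. So there is nothing in the paper to compare your proposal against beyond the bare citation.

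That said, your sketch is a faithful outline of the standard smooth Label Cover construction in the style of~\cite{GRSW}: the $(Q+1)R$-fold clause tuples with $7$ local assignments per clause yielding $M = 7^{(Q+1)R}$, the shared $QR$-sub-tuple plus $R$ auxiliary bits yielding $N = 2^R 7^{QR}$, the $1/(Q+1) \le 1/Q$ collision bound for smoothness, Raz for soundness, and the i.i.d.\ endpoint-sampling plus Jensen argument for weak expansion are all the correct ingredients. Two small points you gloss over: you should check that the marginal on each endpoint is uniform over $V$ (needed both for regularity, which is part of Definition~\ref{def-SLC}, and for $\E_{\mathcal{S}}[p_{\mathcal{S}}] = \delta$ in your Jensen step), and the soundness step requires the observation that the extra $Q$ ``dummy'' coordinates do not help the provers, so the value is still governed by the $R$-fold repetition of the base game. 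Neither is difficult, but both are part of the ``routine bookkeeping'' you defer to~\cite{GRSW}.
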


\subsection{Tail Bounds}
We use the Chernoff bound stated as follows.
\begin{lemma}[Chernoff Bound] 
\label{lem:chernoff}
Suppose $X_1, \dots, X_n$ and independent $\{0,1\}$-valued random variables with $S = \sum_{i=1}^n X_i$ and $\mu = \E[S]$. Then, $\Pr[S \leq (1-\delta)\mu] \leq \tn{exp}(-\delta^2\mu/2)$ for any $\delta > 0$.
\end{lemma}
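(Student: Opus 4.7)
The plan is to prove the lower-tail Chernoff bound via the standard exponential-moment (Chernoff) method applied to $-S$. First I would dispose of the trivial range: if $\delta > 1$, then $(1-\delta)\mu < 0 \leq S$, so $\Pr[S \leq (1-\delta)\mu] = 0$ and the bound holds vacuously. Hence I may assume $\delta \in (0,1]$.

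For $\delta \in (0,1]$, fix a parameter $t > 0$ to be optimized later. Since $x \mapsto e^{-tx}$ is positive and decreasing, Markov's inequality gives
$$\Pr[S \leq (1-\delta)\mu] \;=\; \Pr\!\left[e^{-tS} \geq e^{-t(1-\delta)\mu}\right] \;\leq\; e^{t(1-\delta)\mu}\,\E\!\left[e^{-tS}\right].$$
Independence of the $X_i$'s factorizes the moment generating function. Writing $p_i = \Pr[X_i = 1]$, we get $\E[e^{-tX_i}] = 1 + p_i(e^{-t}-1)$, so
$$\E\!\left[e^{-tS}\right] \;=\; \prod_{i=1}^n \left(1 + p_i(e^{-t}-1)\right) \;\leq\; \prod_{i=1}^n \exp\!\left(p_i(e^{-t}-1)\right) \;=\; \exp\!\left(\mu(e^{-t}-1)\right),$$
using $1+x \leq e^x$ and $\sum_i p_i = \mu$. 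Combining, $\Pr[S \leq (1-\delta)\mu] \leq \exp\!\left(\mu(e^{-t}-1) + t(1-\delta)\mu\right)$. Optimizing by choosing $t = -\ln(1-\delta) > 0$ (which annihilates the derivative of the exponent) yields the refined bound $\bigl(e^{-\delta}/(1-\delta)^{1-\delta}\bigr)^\mu$.

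The remaining step is a one-variable calculus inequality: $-\delta - (1-\delta)\ln(1-\delta) \leq -\delta^2/2$ for $\delta \in [0,1]$. I would verify this by noting both sides equal $0$ at $\delta = 0$ and comparing derivatives: the left side has derivative $\ln(1-\delta)$ and the right side has derivative $-\delta$, and $\ln(1-\delta) \leq -\delta$ on $[0,1)$ is standard. This upgrades the exponent to the advertised form $\exp(-\delta^2\mu/2)$. The only real obstacle is this final analytic comparison; the probabilistic part is entirely mechanical once the Markov-plus-MGF template is set up.
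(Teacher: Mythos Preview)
Your proof is correct and follows the standard exponential-moment derivation of the lower-tail Chernoff bound. The paper itself does not prove this lemma; it merely states it as a well-known tail bound in the preliminaries, so there is no paper proof to compare against.
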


\section{Hardness of LLP Learning {\sf OR} using $\ell$-clause {\sf CNF}}\label{sec:LLPOR_CNF}

We prove the following  hardness reduction which along with Theorem \ref{thm:lc-hardness} implies Theorem \ref{thm:main-1}.

\begin{theorem}
    For any constants $\delta > 0$ and $\ell \in \mathbb{Z}^+$, there is a polynomial time reduction from an instance $\mc{L}$ of \lc  to an \llpor$[2]$ instance $\mc{B}$ s.t. \\
    \tn{YES Case:} If $\mc{L}$ is YES instance then there is an {\sf OR} consistent with all the bags of $\mc{B}$. \\
    \tn{NO Case:} If $\mc{L}$ is a NO instance then there is no  $\ell$-clause {\sf CNF} formula satisfying at least $\delta$-fraction of the bags. 
\end{theorem}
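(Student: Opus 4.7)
The plan is to reduce an instance $\mc{L}(G(U,V,E),M,N,\{\pi_{vu}\})$ of $\lc$ (with the soundness parameter $\xi$ chosen sufficiently small at the end) to an $\llpor[2]$ instance on the variable set $\{x_{v,i} : v \in V,\, i \in [M]\}$. A random bag is generated by first picking a left vertex $u \in U$, then an integer $t$ uniformly from $\{1,2,\dots,2^T\}$ for a large constant $T$, then a tuple of $2t$ neighbors $\hat{v}_1,\dots,\hat{v}_t,\tilde{v}_1,\dots,\tilde{v}_t$ of $u$, and a uniformly random $J \subseteq [N]$. The first point $\bx$ has $x_{\hat{v}_r,i} = 1$ exactly when $\pi_{\hat{v}_r u}(i) \in J$; the second point $\bz$ has $z_{\tilde{v}_r,i} = 1$ exactly when $\pi_{\tilde{v}_r u}(i) \in \overline{J}$; all other coordinates are $0$. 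Each such bag gets target label proportion $1/2$. The final instance is obtained by drawing a polynomially large i.i.d.\ sample from this distribution.

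For the YES case, let $\rho$ satisfy every edge of $\mc{L}$ and consider $h = \bigvee_{v \in V} x_{v,\rho(v)}$. The only coordinates of $\bx$ that can influence $h$ are $x_{\hat{v}_r,\rho(\hat{v}_r)}$, each of which is $1$ iff $\pi_{\hat{v}_r u}(\rho(\hat{v}_r)) = \rho(u) \in J$ (using that $\rho$ satisfies the edge $(\hat{v}_r,u)$); the analogous statement for $\bz$ involves $\overline{J}$. Since $\rho(u)$ lies in exactly one of $J, \overline{J}$, exactly one of $\bx, \bz$ evaluates to $1$, so every bag has label proportion $1/2$ and is satisfied.

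For the NO case, suppose an $\ell$-clause \textsf{CNF} $\mc{C} = C_1 \wedge \cdots \wedge C_\ell$ satisfies a $\delta$-fraction of bags. The first step is to show that negated literals can essentially be ignored: each point has expected Hamming weight $O(tM/N)$, so taking $N$ sufficiently large and applying concentration shows that for most sampled bags every negated literal evaluates to $1$, whence each $C_j$ effectively reduces to a disjunction of its positive literals. Next, classify each clause $C_j$ relative to the sample: the empty case, where the positive literals of $C_j$ touch almost none of the $2t$ chosen vertices (so $C_j$ is $0$ on both $\bx$ and $\bz$ with high probability and the bag cannot achieve label proportion $1/2$), versus the dense case, where a constant fraction of the $2t$ vertices have a positive literal of $C_j$ among their coordinates. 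The geometric scaling trick of drawing $t$ uniformly from $\{1,\dots,2^T\}$ is precisely what makes the intermediate borderline regime rare, uniformly over all $u$ and all $\ell$ clauses, by a union-bound argument on $T$ scales.

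In the dense case for clause $C_j$ and vertex $u$, a birthday-style argument inside the $2t$-sample produces, with constant probability, two distinct right vertices whose distinguished positive literals in $C_j$ project under the corresponding $\pi_{\cdot u}$ to the same label in $[N]$; this in turn yields a candidate label for $u$. Averaging over a constant fraction of $u$'s witnessing the dense case, then selecting a uniformly random clause index $j \in [\ell]$ and, for each $v$, a uniformly random positive-literal label of $v$ appearing in $\mc{C}$, gives a randomized labeling that satisfies at least a $\mathrm{poly}(\delta/\ell)$ fraction of edges in expectation, contradicting Theorem~\ref{thm:lc-hardness} for $\xi$ small enough. The main obstacle I expect is the careful execution of the borderline-is-rare step, which must hold simultaneously for all $u \in U$, all $\ell$ clauses, and all ways positive literals can intersect the sample; this is the place where the geometric choice of $t$ and the low-Hamming-weight reduction to monotone clauses must be coupled together, mirroring the most delicate part of the argument of \cite{KS08b}.
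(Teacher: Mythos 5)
Your reduction, YES-case argument, removal of negated literals, and the split into empty/dense/borderline regimes all follow the paper's plan, with two smaller issues worth flagging. First, you sample $t$ uniformly from $\{1,\dots,2^T\}$; the paper samples from the geometric scale $\{2,4,\dots,2^T\}$, and this matters: the borderline window $[s/\sqrt{T},\,s\sqrt{T}]$ contains $O(\log T)$ of the $T$ geometric scales, but on the linear scale it can contain a constant fraction of $\{1,\dots,2^T\}$ (e.g.\ when $s\approx 2^T/\sqrt{T}$), so the uniform-over-integers choice does not make the borderline rare uniformly in $u$. Second, the expected Hamming weight of a point is $\Theta(tM)$, not $O(tM/N)$: roughly half of the $M$ coordinates of each sampled vertex are set. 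What makes negated literals harmless is that the nonzero coordinates are supported on only $2t$ of the $|V|$ vertices; the argument should go through bi-regularity and $|V|\to\infty$, not through taking $N$ large.

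The genuine gap is in the dense case. You assert that a ``birthday-style argument inside the $2t$-sample'' produces, with constant probability, two sampled vertices whose distinguished positive literals project to the same label in $[N]$, and you use this collision to decode. But $2t\leq 2^{T+1}$ is a constant while $N$ grows, so unless the projections $\pi_{vu}(i_v)$ are already heavily clustered the collision probability is $O(t^2/N)=o(1)$; your decoding has no reason to succeed and the dense case is simply left unhandled. The paper's argument runs in the opposite direction: it first defines $\gamma$ to be the probability of the collision event $\Psi$ (over $t,u,V^x_u,V^z_u$), shows that a non-negligible $\gamma$ would let one decode a good labeling and contradict soundness — hence $\gamma$ is small — and then, \emph{conditioned on no collision}, observes that the events $\{\pi_{vu}(i_v)\in J\}$ for the non-empty sampled $v$ are mutually independent fair coins because the projected labels are distinct and $J\subseteq[N]$ is uniform. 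With $\Omega(\sqrt{T})$ non-empty samples this forces $C(\bx)=1$ and likewise $C(\bz)=1$ except with probability $2^{-\Omega(\sqrt{T})}$, so the bag with label proportion $1/2$ is \emph{not} satisfied. Without this ``no collision $\Rightarrow$ clause evaluates to $1$ on both points'' step, nothing rules out that most dense bags are satisfied, and no contradiction is reached.
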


We begin with the following useful technique.
Let $T \geq 10$ be a large integer and consider the set $\mc{T} = \{2, 4, \dots, 2^T\}$. For any $s \in \R^+$ define the subsets $L(s), R(s) \subseteq \mc{T}$ as
\begin{equation}
    L(s) := \{t \in \mc{T}\,\mid \, t \leq s/\sqrt{T}\} \quad \tn{ and } \quad R(s) := \{t \in \mc{T}\,\mid \, t \geq s\cdot\sqrt{T}\}. \label{eqn:LsRs}
\end{equation}
We have the following simple lemma:
\begin{lemma} \label{lem:LsRs}
    For any $s \in \R^+$, $L(s) \cap R(s) = \emptyset$ and $\left|L(s)\right| + \left|R(s)\right| \geq \left|\mc{T}\right| - 2\log T$.
\end{lemma}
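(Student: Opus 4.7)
The plan is to unpack the definitions, observe that $\mc{T}$ consists of powers of $2$, and then reduce both claims to elementary inequalities about exponents.

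For the disjointness part, I would argue by contradiction: if some $t \in L(s) \cap R(s)$, then simultaneously $t \leq s/\sqrt{T}$ and $t \geq s \sqrt{T}$, so $s\sqrt{T} \leq s/\sqrt{T}$, which forces $T \leq 1$ since $s > 0$; this contradicts $T \geq 10$. Hence $L(s) \cap R(s) = \emptyset$.

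For the cardinality bound, I would estimate the ``middle'' set $M(s) := \mc{T} \setminus (L(s) \cup R(s))$ directly, since $|L(s)| + |R(s)| = |\mc{T}| - |M(s)|$. By the definitions in \eqref{eqn:LsRs}, $M(s) = \{t \in \mc{T} : s/\sqrt{T} < t < s\sqrt{T}\}$. Writing each $t \in \mc{T}$ as $t = 2^k$ with $k \in \{1, 2, \dots, T\}$ and taking $\log_2$ of the defining inequalities, membership in $M(s)$ becomes $\log_2 s - \tfrac{1}{2}\log_2 T < k < \log_2 s + \tfrac{1}{2}\log_2 T$, i.e., $k$ must lie in an open interval of length $\log_2 T$. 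The number of integers in such an interval is at most $\lfloor \log_2 T \rfloor + 1$, so $|M(s)| \leq \log_2 T + 1$.

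The final step is to check that $\log_2 T + 1 \leq 2 \log T$ for $T \geq 10$ (in whichever log base is implicit in the statement) — this is immediate since $\log_2 T \leq 2 \log T - 1$ already for moderate $T$. No step looks hard; the only subtlety is just being careful about the fact that the two defining inequalities use $\sqrt{T}$ and $1/\sqrt{T}$ symmetrically, so the ``gap'' between $L(s)$ and $R(s)$ spans a factor of $T$ multiplicatively, which translates cleanly to an additive gap of $\log_2 T$ in exponent space.
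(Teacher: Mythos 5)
Your proof is correct and follows essentially the same route as the paper's: both arguments bound the complement set $M(s) := \mc{T}\setminus(L(s)\cup R(s))$ by observing that it is contained in a multiplicative window of width $T$, which can hold at most $\log_2 T + 1$ powers of two, and then invoke $\log T + 1 \le 2\log T$ for $T \ge 10$. The only cosmetic difference is that you work in exponent space (taking $\log_2$ of the defining inequalities) while the paper reasons directly about the smallest and largest elements $t', t''$ of the middle window and their ratio $t''/t' \le T$; these are the same computation.
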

\begin{proof}
    Since $T > 1$, $L(s) \cap R(s) = \emptyset$ by definition. Let $t' \in \mc{T}$ be the smallest element which is larger than $s/\sqrt{T}$ and $t'' \in \mc{T}$ be the largest element smaller than $s\sqrt{T}$. By definition, we have that $\mc{T}\setminus\left(L(s)\cup R(s)\right) = [t', t'']\cap\mc{T}$. Note also that $t''/t' \leq T$, and thus $\left|[t', t'']\cap\mc{T}\right| \leq \log T + 1 \leq 2\log T$ since $T \geq 10$, which completes the proof.
\end{proof}

\subsection{Hardness Reduction}
The hardness reduction is from a \lc instance $\mc{L}(U, V, E \subseteq V\times U, M, N, \{\pi_{vu} : [M] \to [N]\}_{e = (v,u) \in E})$. We shall use $\mc{T}$ as defined above for some large enough choice of $T$ depending on $\delta$ and $\ell$. Note that $T$ is a constant compared to $|V|$ which is an increasing value. The underlying space of the vectors is $\mc{X} = \{0,1\}^{V\times [M]}$ i.e., a vector $\bx \in \mc{X}$ is given by $\bx = \left(x_{v,i}\right)_{v \in V, i \in [M]}$. 
The reduction yields a distribution $\mc{D}_{\mc{B}}$ over $2$-sized bags and all bags $B$ in its support have the label proportion $\sigma = 1/2$. A random bag from $\mc{D}_{\mc{B}}$ is given by the following steps:
\begin{enumerate}
    \item Sample $t$ uniformly at random from $\mc{T}$.
    \item U.a.r. sample a vertex $u \in U$.
    \item Independently and u.a.r. sample vertices $V^x_u = \{\hat{v}_1, \dots, \hat{v}_t\}$ and $V^z_u = \{\tilde{v}_1, \dots, \tilde{v}_t\}$ from the neighborhood $N(u)$ of $u$ in $V$.
    \item Randomly sample  $J \subseteq [N]$, and let $\ol{J} = [N]\setminus J$. 
    \item Define a point $\bx \in \mc{X}$ as follows. For each $i \in [M]$ and $v \in V$ set:
    \begin{equation}
        x_{v,i} = \begin{cases}
        1 & \tn{ if } v \in V^x_u \tn{ and } \pi_{vu}(i) \in J \\
        0 & \tn{ otherwise.}
        \end{cases}
    \end{equation}
    \item Define another point $\bz \in \mc{X}$ as follows. For each $i \in [M]$ and $v \in V$ set:
    \begin{equation}
        z_{v,i} = \begin{cases}
        1 & \tn{ if } v \in V^z_u \tn{ and } \pi_{vu}(i) \in \ol{J} \\
        0 & \tn{ otherwise.}
        \end{cases}
    \end{equation}
    \item Output $\left(B = \{\bx, \bz\}, \sigma_B = 1/2\right)$.
\end{enumerate}
In particular, observe that the points $\bx$ and $\bz$ are zero outside of the coordinates corresponding to the vertices in $V^x_u\cup V^z_u$. 

\medskip
\noindent
{\bf YES Case.} Consider a labeling $\rho$ to the vertices of $\mc{L}$ that satisfy all the edges. Define the {\sf OR}, $h^*(\bx) = \bigvee_{v \in V}x_{v,\rho(v)}$. Let $u$ be the choice in Step 2 above, and assume that $\rho(u) \in J$ as chosen in Step 4. Then, we know that for all $v \in V^x_u\cup V^z_u$ (as chosen in Step 3), $\pi_{vu}(\rho(v)) = \rho(u) \in J$. By construction of $\bx$ and $\bz$ therefore, $h^*(\bx) = 1$ and $h^*(\bz) = 0$, and thus $B$ is satisfied by $h^*$. Similarly, when $\rho(u) \in \ol{J}$, we obtain that $h^*(\bx) = 0$ and $h^*(\bz) = 1$.

\subsection{NO Case}
Assume for a contradiction an $\ell$-clause {\sf CNF} formula $h'$ s.t. $\Pr_{B \leftarrow \mc{D}_{\mc{B}}}[h' \tn{ satisfies } B] \geq \delta$. From the bi-regularity of $\mc{L}$,
if any clause of $h'$ contains a negated literal then with probability at least $1 - 2t/|V|$ the literal's coordinate is not from those of the vertices in $V^x_u\cup V^z_u$ and the clause evaluates to $1$ on both points of a random bag $B \leftarrow \mc{D}_{\mc{B}}$. Removing all such clauses  we obtain $r$-clause {\sf CNF}  $h = C_1\wedge\dots \wedge C_r$ ($r \leq \ell$) that satisfies at least $\delta - 2t\ell/|V| \geq \delta/2$ fraction of the bags (since we can take $|V| \gg 2^{T+2}\ell/\delta$ as $|V|$ is super-constant). We have the following lemma.
\begin{lemma} \label{lem:clause0or1}
    For any constant $\zeta > 0$, there is a choice of $T = T(\zeta)$ s.t. for any $C_i$ ($i \in [r]$),
        $\Pr_{B = (\bx,\bz) \leftarrow \mc{D}_{\mc{B}}}\left[C_i(\bx) \neq C_i(\bz)\right] \leq \zeta$.
    
\end{lemma}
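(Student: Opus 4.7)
The plan is to condition on $(u, J)$ and average the split probability over $t \in \mc{T}$, then average over $(u, J)$. For fixed $u, J$, set $A_u(J) := \{v \in N(u) : \pi_{vu}(\{j : (v,j) \in S_i\}) \cap J \neq \emptyset\}$ and $B_u(J)$ analogously with $\ol{J}$, where $S_i$ indexes the literals of $C_i$. Writing $\alpha = |A_u(J)|/|N(u)|$ and $\beta = |B_u(J)|/|N(u)|$, the independence of $V^x_u, V^z_u$ (each an i.i.d.\ $t$-sample from $N(u)$) yields
\begin{equation*}
\Pr[C_i(\bx) \neq C_i(\bz) \mid u, J, t] = (1-\alpha)^t + (1-\beta)^t - 2(1-\alpha)^t(1-\beta)^t.
\end{equation*}

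The key idea is to exploit the log-spacing of $\mc{T}$ via Lemma \ref{lem:LsRs}, applied with $s = 1/\alpha$ and $s' = 1/\beta$ (taking $\infty$ when the denominator is $0$). For $t \in L(s) \cap L(s')$ both $(1-\alpha)^t$ and $(1-\beta)^t$ exceed $1 - 1/\sqrt{T}$, bounding the per-$t$ contribution by $O(1/\sqrt{T})$; for $t \in R(s) \cap R(s')$ both are at most $e^{-\sqrt{T}}$. At most $4\log T$ elements of $\mc{T}$ are intermediate for either $s$ or $s'$. The remaining ``mismatched'' set $(L(s)\cap R(s')) \cup (R(s)\cap L(s'))$ is nonempty only when $\alpha$ and $\beta$ differ by a factor at least $T$, and its cardinality in $\mc{T}$ is at most $\max(0, |\log_2(\alpha/\beta)| - \log_2 T) + O(1)$ by the geometric spacing. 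Hence
\begin{equation*}
\Pr_t[C_i(\bx) \neq C_i(\bz) \mid u, J] \leq O(1/\sqrt{T} + \log T / T) + \max(0, |\log_2(\alpha/\beta)| - \log_2 T)/T.
\end{equation*}

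Finally, I would average over $(u, J)$ by cases on $\gamma_u = |V_i \cap N(u)|/|N(u)|$, where $V_i = \{v : \exists j,\,(v,j) \in S_i\}$. For $\gamma_u = 0$ the split is zero. For small $\gamma_u$, union-bounding gives $\Pr[C_i(\bx) = 1 \text{ or } C_i(\bz) = 1 \mid u, t] \leq 2(1 - (1-\gamma_u)^t)$, which averages over $t \in \mc{T}$ to a small quantity provided $|N(u)| \gg 2^T$ in the Label-Cover instance (which we may arrange). For large $\gamma_u$, one argues that $\alpha \approx \beta$ for most $J$, so the mismatched set is empty. The main obstacle is this concentration step: the indicators $\{\mathbf{1}[v \in A_u(J)]\}_{v \in V_i \cap N(u)}$ can be strongly correlated when many $v$'s share a common projection label, collapsing in the worst case to $(\alpha,\beta) \in \{(\gamma_u,0),(0,\gamma_u)\}$ uniformly. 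Such highly consistent local configurations are controlled by bi-regularity of the Label-Cover graph and the super-constant size of $U$: the fraction of $(u, J)$ with $\alpha/\beta$ or $\beta/\alpha$ exceeding $T$ contributes $o_T(1)$ to the total expectation. Choosing $T = T(\zeta)$ sufficiently large and the Label-Cover parameters accordingly makes the total at most $\zeta$.
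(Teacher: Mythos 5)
Your decomposition differs from the paper's: you condition on $J$ up-front and track the two activation fractions $\alpha, \beta$ over $N(u)$, whereas the paper fixes a single distinguished variable $i_v$ per non-empty vertex and applies the $L/R$ split to the single quantity $\mu(u)$ (fraction of non-empty neighbors), bringing $J$ into play only in the dense case. Both formulations run into the same crux, which you correctly identify: when $\alpha$ and $\beta$ differ by a factor $\gg T$, a constant fraction of $t \in \mc{T}$ land in the ``mismatched'' region and the per-$t$ split probability is $\Omega(1)$.

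The gap is in how you dispose of this case. You claim the mismatched $(u,J)$ pairs ``are controlled by bi-regularity of the Label-Cover graph and the super-constant size of $U$,'' contributing $o_T(1)$. This is not true: bi-regularity and instance size do nothing to prevent many non-empty neighbors of $u$ from sharing a projection label, and that is exactly the configuration that makes $(\alpha,\beta)$ collapse to $(\gamma_u, 0)$ or $(0,\gamma_u)$ depending on the side of $J$. What actually rules this out is the \emph{soundness of Label Cover}: if many sampled pairs of non-empty vertices share a projection (the paper's event $\Psi$), one decodes a labeling that satisfies a nontrivial fraction of edges, contradicting the NO case. Conditioned on $\neg\Psi$, the distinguished coordinates are independent Bernoulli$(1/2)$ over $J$, which is what keeps $\alpha$ and $\beta$ comparable. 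Your proposal never invokes the NO-case soundness, so the mismatched case is simply unaddressed. Separately, the remark that ``small $\gamma_u$'' is handled ``provided $|N(u)| \gg 2^T$'' is off: the sparse case is controlled because, for any fixed $\gamma_u > 0$, most $t \in \mc{T}$ fall in $L(1/\gamma_u) \cup R(1/\gamma_u)$ (Lemma~\ref{lem:LsRs}), not because $N(u)$ is large.
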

Using the above lemma, the NO case proof can be completed by taking $\zeta = \delta/(6\ell)$. By a union bound, the probability that any one of $C_1, \dots, C_r$ evaluates differently on $\bx$ and $\by$ is at most $\delta/6$. This also upper bounds the probability of satisfying the bags of $\mc{D}_{\mc{B}}$, contradicting  our assumption.

\noindent
\begin{proof}[Proof of Lemma \ref{lem:clause0or1}]
Fix any clause $C \in \{C_1, \dots, C_r\}$, and from our construction above $C$ has no negated literals. Call a vertex which has at least one variable from $C$ as \emph{non-empty}, otherwise call it \emph{empty}. For each non-empty vertex $v$ arbitrarily choose $i_v$ such that the $(v,i_v)$-th variable is in $C$. For each $u \in U$ let $\mu(u)$ denote the fraction of its neighbors which are non-empty. In our analysis below we shall be collecting the \emph{error} probabilities using
\begin{eqnarray}
    \Pr\left[A\right] \leq \Pr\left[A\cap B\right] + \Pr\left[\bar{B}\right], \tn{ and } \Pr\left[A\cap B\right] \leq \min\left\{\Pr\left[A\right],\Pr\left[A\mid B\right]\right\}  \label{eqn:collect-error}
\end{eqnarray}
for any two events $A$ and $B$, where $\bar{B}$ denotes the complement of $B$.

We will first bound $\gamma$ which we define to be the probability over the choice of $t$, $u$, $V^x_u$ and $ V^z_u$ that there is a pair of non-empty vertices $v, v' \in V^x_u\cup  V^z_u$ s.t. $\pi_{vu}(i_v) = \pi_{v'u}(i_{v'})$. We call this event $\Psi$. In this case we can construct a randomized partial labeling $\rho$ for the vertices of $\mc{L}$ as follows: for each $v \in V$ which is non-empty, assign it the label $i_v$ defined above. For each $u$, select a random neighbor $v_u$ and assign $u$ the label $\pi_{v_u u}(i_{v_u})$ if $v_u$ is non-empty. Since $t \leq 2^T$ we obtain that this randomized labeling satisfies in expectation at least $\max_t \gamma/(2t)^2 = \gamma/\left(2^{2(T+1)}\right)$ fraction of the edges. Choosing the soundness of $\mc{L}$ to be small enough one can ensure that $\gamma \leq \zeta/100$. 

We consider two the cases for $t$ and $u$ below, and in each of them (setting $B = \Psi$ in \eqref{eqn:collect-error}) we can assume that $\Psi$ does not occur, while incurring at most $\zeta/100$ probability error.

\medskip
\noindent
{\bf Case I: $t \in L(1/\mu(u))$.} In this case the probability that there is non-empty vertex in $V^x_u\cup V^z_u$ is at most $2t\mu(u) \leq 2\left(1/\sqrt{T}\right)\left(1/\mu(u)\right)\mu(u)\leq 2/\sqrt{T}$. Therefore, except with probability at most $2/\sqrt{T}$, $C$ evaluates to zero on $\bx$ and $\bz$.

\medskip
\noindent
{\bf Case II: $t \in R(1/\mu(u))$.} 
We shall first show that w.h.p. $V^x_u$ contains a significant number of non-empty vertices, and given that happens, w.h.p. $C$ evaluates to $1$ on $\bx$. 
The expected number of non-empty vertices in $V^x_u$ is $t \mu(u) \geq \left(\sqrt{T}/\mu(u)\right)\mu(u) \geq \sqrt{T}$. Therefore, by the Chernoff bound (see Lemma~\ref{lem:chernoff}), except with probability $\tn{exp}\left(-\sqrt{T}/8\right)$, the number of non-empty vertices is at least $\sqrt{T}/2$. 
From the assumption that $\Psi$ does not occur, for all pairs of non-empty vertices $v, v' \in V^x_u$, $\pi_{vu}(i_v) \neq \pi_{v'u}(i_{v'})$. Thus, each $\{x_{v,i_v}\,\mid\, v \in V^x_u, v \tn{ is non-empty }\}$ is independently set to $1$ with probability $1/2$. In particular, except with probability $(1/2)^{\sqrt{T}/2}$, at least one of $\{x_{v,i_v}\,\mid\, v \in V^x_u, v \tn{ is non-empty }\}$ is set to $1$ and thus $C$ evaluates to $1$ on $\bx$. 
The same argument as above also works for $ V^z_u$ and $\bz$. 

\medskip
From the analysis of the above two cases, using Lemma \ref{lem:LsRs} and repeated applications of \eqref{eqn:collect-error} to add up the error probabilities above (for $V^z_u$ and $\bz$ as well in Case II) we obtain that:
\begin{equation}
    \Pr\left[C(\bx) \neq C(\bz)\right] \leq 2/\sqrt{T} + \zeta/100 + 2\left(\frac{\log T}{T} +  \tn{exp}\left(-\sqrt{T}/8\right) + 2^{-\sqrt{T}/2}\right)
\end{equation}
Choosing $T$ to be $100/\zeta^2$ we can ensure that the above probability is at most $\zeta$.
\end{proof}

\section{Hardness of LLP Learning {\sf OR} using $\ell$-{\sf DNF}} \label{sec:LLPOR_ellDNF}
This section proves the following hardness reduction which implies Theorem \ref{thm:main-2}.

\begin{theorem}
    For any constants $\delta > 0$ and $\ell \in \mathbb{Z}^+$, there is a polynomial time reduction from an instance $\mc{L}$ of \lc  to an \llpor$[2]$ instance $\mc{B}$ s.t. \\
    \tn{YES Case:} If $\mc{L}$ is YES instance then there is an {\sf OR} consistent with all the bags of $\mc{B}$. \\
    \tn{NO Case:} If $\mc{L}$ is a NO instance then there is no  $\ell$-{\sf DNF} formula satisfying at least $(1/2 + \delta)$-fraction of the bags. 
\end{theorem}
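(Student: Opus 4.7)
The plan is to extend the reduction of Section~\ref{sec:LLPOR_CNF} by mixing in a second family of $1$-sized bags that forces structure on any DNF hypothesis. For each $v \in V$, let $\mb{1}_v \in \{0,1\}^{V \times [M]}$ satisfy $(\mb{1}_v)_{w,i} = 1$ iff $w = v$, and add the bag $(\{\mb{1}_v\}, 1)$. The output distribution is a $\lambda : (1-\lambda)$ combination of $\mc{D}_\mc{B}$ from Section~\ref{sec:LLPOR_CNF} and the uniform distribution over these indicator bags, with $1 - \lambda = \Theta(\delta/\ell)$ chosen so that exceeding the $(1/2 + \delta)$-threshold requires satisfying significant fractions of both families. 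In the YES case, the OR $h^*(\bx) = \bigvee_{v} x_{v, \rho(v)}$ used in Section~\ref{sec:LLPOR_CNF} already satisfies every indicator bag (since $x_{v, \rho(v)} = 1$ at $\mb{1}_v$) along with every bag in $\mc{D}_\mc{B}$, giving perfect consistency.

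For the NO case, suppose an $\ell$-DNF $h = T_1 \vee \cdots \vee T_r$ with $r \leq \ell$ satisfies at least a $(1/2 + \delta)$-fraction of bags. First, $h(\mb{1}_v) = 1$ for at least a $\beta = \Omega(\delta)$-fraction of \emph{covered} vertices $v$; for each covered $v$, some term of $h$ is \emph{relevant} to $v$, meaning its positive literals lie entirely on coordinates of $v$ and no negated literal is on a coordinate of $v$. Pigeonholing over the $\leq \ell$ terms, one term $T^*$ is relevant to a $\beta/\ell$-fraction of vertices; fixing a positive-literal index $i_v$ in $T^*$ for each such $v$ yields a candidate partial labeling of $V$. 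I would then adapt Lemma~\ref{lem:clause0or1} to each term $T_j$ using the same $t$-partition: Case~I forces $T_j$ to $0$ on both $\bx, \bz$ with high probability, while Case~II makes collisions $\pi_{vu}(i_v) = \pi_{v'u}(i_{v'})$ among covered $v, v' \in V^x_u$ (and symmetrically in $V^z_u$) into a randomized Label-Cover labeling that contradicts Theorem~\ref{thm:lc-hardness}. Ruling such collisions out, the covered activations of $T^*$ on $\bx$ (determined by $J$ alone) and on $\bz$ (determined by $\bar{J}$ alone) have the same marginal probability $p$ and are approximately independent since $V^x_u \cap V^z_u = \emptyset$ w.h.p., giving $\Pr[h(\bx) \neq h(\bz)] \leq 2p(1-p) + o(1) \leq 1/2 + o(1)$ on $\mc{D}_\mc{B}$. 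Combined with the indicator-bag constraint and the choice of $\lambda$, the total satisfaction rate is at most $1/2 + \delta/2$, contradicting the assumption.

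The hardest part will be the joint-event bound in Case~II: unlike the CNF case, where each clause could be forced to be constant on both $\bx$ and $\bz$ (yielding $o(1)$-hardness), DNF terms genuinely activate on both points when a covered vertex appears in $V^x_u$ or $V^z_u$, so the trivial $2p(1-p) \leq 1/2$ upper bound arising from near-independent Bernoulli marginals is the correct target. Making this near-independence rigorous, while handling terms with arbitrary negated literals (which cannot simply be dropped as in the CNF analysis, since a negated literal on an off-vertex coordinate can help a term evaluate to $1$ on a sparse point), is the technical crux and the reason the hardness factor weakens from $o(1)$ to $1/2 + o(1)$.
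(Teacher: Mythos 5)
Your reduction has the right high-level structure --- mixing the Section~\ref{sec:LLPOR_CNF} distribution with size-$1$ indicator bags $(\{\mb{1}_v\},1)$ is exactly the idea, and the paper uses the same device (with equal $1/2:1/2$ weighting rather than $\lambda:(1-\lambda)$). The YES case is identical. However, your NO-case analysis diverges from the paper's and, as written, contains a genuine gap.

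You aim to show $\Pr[h(\bx)\neq h(\bz)]\leq 2p(1-p)+o(1)\leq 1/2+o(1)$ by arguing that $h(\bx)$ and $h(\bz)$ are approximately independent Bernoullis with the same marginal. Two problems: (i) the independence claim cannot be established merely from $V^x_u\cap V^z_u=\emptyset$, since both points are deterministic functions of the \emph{same} $J$; what you actually need is that the projected supports $\bigcup_{v\in V^x_u}\pi_{vu}(\cdot)$ and $\bigcup_{v\in V^z_u}\pi_{vu}(\cdot)$ of the positive literals of $h$ are disjoint, a collision condition ranging over \emph{all} positive literals of all terms (not just the singled-out $T^*$) that is stronger than what your randomized labeling controls. (ii) You explicitly flag the handling of negated literals as the unresolved ``technical crux,'' so the proposal is incomplete there.

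The paper's route is both simpler and strictly stronger, and it is worth seeing why: it does not bound the disagreement by $1/2$ --- it drives it to $o(\delta)$. After ruling out terms with only negated literals (they make every size-$2$ bag unsatisfiable) and conditioning --- with error $O(T\ell/d_U)$ --- on no negated literal of any $\Gamma(v)$ landing on a $V^x_u\cup V^z_u$ coordinate, the size-$1$ bags force the following: for a ``good'' $u$ that beats $1/2+\delta/2$, a $\delta$-fraction of $N(u)$ must be \emph{non-empty} (have a term whose positive literals are all on that vertex and whose negated literals are all off it). Thus $V^x_u$ contains $\Omega(\delta T)$ non-empty vertices w.h.p., and conditioned on the $\kappa_u$-event (no collision among $\Delta(v)$'s), the corresponding terms fire independently with probability $\geq 2^{-\ell}$ each, so $h(\bx)=1$ except with probability $(1-2^{-\ell})^{\Omega(\delta T)}$. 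The same holds for $\bz$, so $\Pr[h(\bx)\neq h(\bz)]\leq \kappa_u+O(\delta)$, which when averaged against $\E[\kappa_u\mid\text{good}]\leq\delta/100$ contradicts the $1/2+\delta/2$ lower bound for good $u$. The $1/2+o(1)$ hardness factor thus comes entirely from the $1/2$ mass on size-$1$ bags; the size-$2$ bags are essentially never satisfied, rather than being satisfied with probability $\approx 1/2$. This one-sided ``$h=1$ on both points'' argument is what avoids the independence issues you ran into, and it is also why the paper can afford the fixed $t=T$ instead of the random $t\in\mc{T}$ partition you carry over from the CNF reduction.
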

\subsection{Hardness Reduction}
The setup is the same as in the previous section. 
The reduction outputs a distribution $\mc{D}_{\mc{B}}$ over two types of bags with equal probability: (i) the first type has size $1$ with label proportion $1$, (ii) the second type has size $2$ and label proportion $1/2$. With $T$ being a large enough constant to be chosen later, the following steps define a random bag of $\mc{D}_{\mc{B}}$.
\begin{enumerate}
    \item U.a.r. sample a vertex $u \in U$.
    \item With probability $1/2$ do the following:
    \begin{enumerate}
        \item U.a.r. sample vertex $v \in N(u)$ and create the point $\ol{\bx}^{(v)}$ as follows: set all coordinates $\{\ol{x}^{(v)}_{v,i}\}_{i=1}^M$ to $1$, and set all the other coordinates to $0$.
        \item Output $\left(B = \{\ol{\bx}^{(v)}\}, \sigma_B = 1\right)$.
    \end{enumerate}
    \item With the remaining probability $1/2$ do the following:
    \vg{The $\hat{V}$ and $\tilde{V}$ are a bit hard to differentiate.}\rs{Changed the notation to $V^x$ and $V^z$}
    \begin{enumerate}
        \item Independently and u.a.r. sample vertices $V^x_u = \{\hat{v}_1, \dots, \hat{v}_T\}$ and $V^z_u = \{\tilde{v}_1, \dots, \tilde{v}_T\}$ from the neighborhood $N(u)$ of $u$ in $V$.
        \item Randomly sample  $J \subseteq [N]$, and let $\ol{J} = [N]\setminus J$. 
        \item Define a point $\bx \in \mc{X}$ as follows. For each $i \in [M]$ and $v \in V$ set:
        \begin{equation}
            x_{v,i} = \begin{cases}
            1 & \tn{ if } v \in V^x_u \tn{ and } \pi_{vu}(i) \in J \\
            0 & \tn{ otherwise.}
            \end{cases}
        \end{equation}
        \item Define another point $\bz \in \mc{X}$ as follows. For each $i \in [M]$ and $v \in V$ set:
        \begin{equation}
            z_{v,i} = \begin{cases}
            1 & \tn{ if } v \in V^z_u \tn{ and } \pi_{vu}(i)\in \ol{J} \\
            0 & \tn{ otherwise.}
            \end{cases}
        \end{equation}
        \item Output $\left(B = \{\bx, \bz\}, \sigma_B = 1/2\right)$.
\end{enumerate}
\end{enumerate}
\medskip
{\bf YES Case.} This is easy to see using the same {\sf OR} formula $h^*$ defined in the previous section. Using the same arguments $h^*$ satisfies all the bags of size $2$ from $\mc{D}_{\mc{B}}$. Further, $h^*$ has exactly one (positive) literal from the coordinates corresponding to each $v \in V$ so that $h^*(\ol{\bx}^{(v)}) = 1$ where $\ol{\bx}^{(v)}$ is as defined in Step 2a. of $\mc{D}_{\mc{B}}$. Thus, $h^*$ satisfies all the bags of size $1$ as well.  

\subsection{NO Case}
Let us assume that there is an $\ell$-{\sf DNF} $h$ that satisfies $1/2 + \delta$ fraction of the bags of $\mc{D}_{\mc{B}}$. First, observe that if $h$ has a term consisting only of negated literals, then from the bi-regularity of $\mc{L}$ and a union bound that term will not have any coordinate from among the vertices chosen in Step 3a with probability at least $1 - 2T\ell/|V|$. Thus, $h$ will have label proportion $1$ on at least $1 - 2T\ell/|V|$ fraction of the $2$-sized bags i.e., $h$ will not satisfy them, implying that the maximum fraction of bags satisfied by $h$ is $1/2 + T\ell/|V|$. This is a contradiction since $|V| = \omega(T\ell)$ and can be taken to be large enough. Thus, we may assume that $h$ does not have a term of only negated literals.

Before proceeding, let us call a $v \in V$ as \emph{non-empty} if $h$ has a term in which all the positive literals correspond to $v$ and none of the negated literals correspond to $v$. Let $\Gamma(v)$ be an arbitrary map from each non-empty $v$ to one such term corresponding to it. Note that $\Gamma(v)$ is injective. Further, define $\Delta(v)$ to be the set of all indices $i \in [M]$ such that the positive literal corresponding to $(v,i)$ occurs in $\Gamma(v)$.

Define for each $u \in U$,  $\kappa_u$ to be the probability that the choice of $V^x_u$ and $V^z_u$ in Step 3a satisfies that there exists a pair $v, v' \in V^x_u \cup V^z_u$ of non-empty vertices such that $\pi_{vu}\left(\Delta(v)\right) \cap \pi_{v'u}\left(\Delta(v')\right) \neq \emptyset$ (call these \textit{intersecting non-empty} pair of vertices). Using this, let us define a randomized labeling $\rho$ for the vertices in $\mc{L}$ as follows: for each non-empty $v \in V$, $\rho(v)$ is chosen u.a.r from $\Delta(v)$, and for each $u \in U$, $v_u$ is chosen u.a.r. from $N(u)$ and then if $v_u$ is non-empty $\rho(u)$ set to $i$ which is chosen u.a.r. from $\Delta(v_u)$. By standard arguments, this labeling satisfies in expectation at least $\kappa_u/(4\ell^2T^2)$ fraction of the edges incident on $u$, and thus overall in expectation $\E_u[\kappa_u]/(4\ell^2T^2)$ fraction of edges of $\mc{L}$. Choosing the soundness of $\mc{L}$ to be small enough (and since $T$ is a constant), we can assume that $\E_u[\kappa_u] \leq \delta^2/200$.

By averaging, for at least $\delta/2$ fraction of $u \in U$ (call them \emph{good}) $h$ satisfies at least $1/2 + \delta/2$ fraction of the bags from $\mc{D}_{\mc{B}}\mid_u$ i.e., given the choice of $u$ in Step 1.  In particular from the above,
\begin{equation}
    \E_u[\kappa_u | u \tn{ good }] \leq \delta/100. \label{eqn:kappaugood}
\end{equation}

 For any such good $u$, $h$ must satisfy at least $\delta$ fraction of the bags of size $1$ and label proportion $1$ (as they constitute exactly half of the bags), implying that for at least $\delta$-fraction  of $v \in N(v)$, $h\left(\ol{\bx}^{(v)}\right) = 1$. Clearly, these vertices are non-empty for $h$ to evaluate to $1$ on them, and thus any good $u$ has at least $\delta$-fraction non-empty $v$ in $N(u)$.

Let us fix on one such good $u$. First, from above we can assume by adding an error probability of $\kappa_u$ that in Step 3a, $V^x_u \cup V^z_u$ does not contain an intersecting non-empty pair. Further, the probability that any term in $\{\Gamma(v)\,\mid\, v \in V^x_u \cup V^z_u\tn{ and non-empty}\}$ contains a negated literal corresponding to vertices in $V^x_u \cup V^z_u$ is at most $2T\ell/d_U$ where $d_U$ is the uniform degree on $U$. Since $d_U$ can be taken to be an arbitrarily large constant, possibly by replicating $V$, we can assume by adding an error probability of $\delta/100$ that this event does not occur.

Given the above, we will show that w.h.p over the choice of a $2$-sized bag, $h$ evaluates to $1$ on both $\bx$ and $\bz$.   We shall prove this for $\bx$, the argument for $\bz$ is analogous and the conjunction is obtained by a union bound. Since $u$ is good, the expected number of non-empty vertices in  $V^x_u$ is at least $\delta T$. By Chernoff bound, by adding an error probability of $\tn{exp}(-\delta T/8)$ we can assume there are at least $\delta T/2$ non-empty vertices in $V^x_u$. By our assumptions above, each term in $\{\Gamma(v)\,\mid\, v \in V^x_u \cup V^z_u\tn{ and non-empty}\}$ independently evaluates to $1$ w.p. at least $(1/2)^\ell$ over the choice of $\bx$. Thus, the probability that none of them evaluate to $1$ is at most $(1 - (1/2)^\ell)^{\delta T/2}$. This analysis can be repeated for $V^z_u$ and $\bz$.

Summing up the error probabilities (using repeated applications of \eqref{eqn:collect-error}), we obtain that:
\begin{equation}
    \Pr\left[h(\bx) \neq h(\bz)\right] \leq \kappa_u + \delta/100 + 2\left(\tn{exp}(-\delta T/8) + (1 - (1/2)^\ell)^{\delta T/2}\right) \leq \kappa_u + \delta/50 \label{eqn:kapp_uplusdeltaover50}
\end{equation}
for a good $u$, using an appropriate choice of $T = O\left(2^\ell\log(1/\delta)/\delta\right)$. By \eqref{eqn:kappaugood}, the average of the LHS of \eqref{eqn:kapp_uplusdeltaover50} over all good $u$ is at most $3\delta/100$, which means that $h$ satisfies on an average at most $1/2 + 3\delta/100$ bags corresponding to a random choice of good $u$. This is a contradiction to the definition of good $u$ thus completes the NO case analysis.

\section{Hardness of LLP Learning Parities} \label{sec:paritieshardness}
This section is devoted to proving the following hardness reduction which, along with the inapproximability of \slc (Th. \ref{thm:slc-hardness}) proves Theorem \ref{thm:main-3}.
\begin{theorem}
    For any constants $\delta > 0$ and $q \in \mathbb{Z}^+$, $q \geq 2$, there is a polynomial time reduction from an instance $\mc{L}$ of \slc  to an \llpparity$[q]$ instance $\mc{B}$ s.t. \\
    \tn{YES Case:} If $\mc{L}$ is YES instance then there is a parity consistent with all the bags of $\mc{B}$. \\
    \tn{NO Case:} If $\mc{L}$ is a NO instance then there is no  parity satisfying at least $(q/2^{q-1} + \delta)$-fraction of the bags. 
\end{theorem}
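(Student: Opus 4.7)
The plan is to compose Smooth Label Cover (SLC) with a dictatorship test on $\F_2^M$ whose $q$-sized bags are exactly those described in the overview. Given an instance $\mc{L}(G(V,E),N,M,\{\pi_{ev}\})$ with parameters $Q,R$ to be chosen as large constants depending on $q$ and $\delta$, I introduce Boolean variables $x_{v,i}$ for $v\in V,\ i\in[M]$, so that a parity hypothesis is identified with its support $S\subseteq V\times[M]$ with per-vertex restriction $S_v:=\{i:(v,i)\in S\}$. The \llpparity$[q]$ instance mixes, with constant relative weights $(1-\alpha):\alpha$, two types of bags. First, \emph{dictator bags} on a uniformly random vertex $v$: independently draw $j(i)\in[q]$ uniformly for each $i\in[M]$ and form the $q$-sized bag whose $j$-th point has $x^{(j)}_{v,i}=1$ iff $j=j(i)$, with target proportion $1/q$. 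Second, \emph{$\F_2$-linear folding bags} of size one, which encode homogeneous linear constraints via target proportions in $\{0,1\}$: (a) for each $v$, the bag with single point $\mathbf{1}_{\{v\}\times[M]}$ and target $1$, forcing $|S_v|$ odd; (b) for each edge $e=(v_1,v_2)$ and $k\in[N]$, the bag with single point whose $1$-coordinates are $\pi_{ev_1}^{-1}(k)\cup\pi_{ev_2}^{-1}(k)$ and target $0$, forcing $|S_{v_1}\cap\pi_{ev_1}^{-1}(k)|\equiv|S_{v_2}\cap\pi_{ev_2}^{-1}(k)|\pmod 2$. The weight $\alpha$ is chosen so that any parity attaining the $q/2^{q-1}+\delta$ threshold must satisfy all but a vanishing fraction of folding bags.

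Completeness is immediate: given a satisfying labeling $\rho$, take $S^{*}=\{(v,\rho(v)):v\in V\}$. Each dictator bag on $v$ has exactly one point with $x^{(j)}_{v,\rho(v)}=1$ (proportion $1/q$); each per-vertex folding bag satisfies $|S^{*}_v|=1$; each edge-consistency bag satisfies $[\pi_{ev_1}(\rho(v_1))=k]+[\pi_{ev_2}(\rho(v_2))=k]\equiv 0\pmod 2$ since $\rho$ satisfies $e$.

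For soundness assume $\chi_S$ satisfies at least a $q/2^{q-1}+\delta$ fraction of bags; by the weight choice it satisfies almost all folding bags, so $|S_v|$ is odd for almost every $v$ and the consistency relation holds on almost every $(e,k)$. The core inner step bounds the dictator-bag satisfaction at a fixed $v$: the $\F_2^q$-random vector $(\chi_{S_v}(\bx^{(j)}))_{j\in[q]}=\sum_{i\in S_v}e_{j(i)}$ is supported on the coset $H_b=\{u\in\F_2^q:\sum_j u_j\equiv b\pmod 2\}$ with $b=|S_v|\bmod 2$; its Fourier coefficient on a non-constant character $\chi_T$ equals $(1-2|T|/q)^{|S_v|}$, decaying as $(1-2/q)^{|S_v|}$. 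Hence once $|S_v|\geq K=K(q,\delta)$ the distribution is $\delta/4$-close in total variation to uniform on $H_b$, and since the $q$ weight-one strings (the only ones giving label proportion $1/q$) lie in $H_1$ of size $2^{q-1}$, large odd $|S_v|$ yields satisfaction at most $q/2^{q-1}+\delta/4$ and large even $|S_v|$ at most $\delta/4$. An averaging argument (using $q/2^{q-1}<1$ for $q\geq 3$; the statement is vacuous at $q=2$) then produces an $\Omega(\delta)$ fraction of \emph{good} vertices with odd $|S_v|\leq K$.

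Finally, I decode a randomized labeling $\rho$ by picking $\rho(v)\in S_v$ uniformly at random for each good $v$ (arbitrarily elsewhere). For any edge $e=(v_1,v_2)$ between two good vertices whose consistency bags are satisfied, oddness of $|S_{v_1}|$ yields some $k^{*}\in[N]$ with $|S_{v_1}\cap\pi_{ev_1}^{-1}(k^{*})|$ odd (hence $\geq 1$), and the $(e,k^{*})$ consistency constraint produces $i_2\in S_{v_2}$ with $\pi_{ev_2}(i_2)=k^{*}$; such an edge is satisfied under $\rho$ with probability at least $1/K^{2}$. Weak expansion makes the fraction of edges between good vertices $\geq\Omega(\delta^2)$; smoothness (set $Q\gg K^{2}/\delta$) kills the fraction of incident edges on which two distinct labels in $S_v$ collide under a projection; and near-total satisfaction of folding bags leaves negligibly few bad-consistency edges. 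Hence $\rho$ satisfies an $\Omega(\delta^{4}/K^{2})$ fraction of the edges of $\mc{L}$, and choosing $R$ with $2^{-c_0 R}$ below this gives the required contradiction via Theorem~\ref{thm:slc-hardness}. The main technical obstacle will be the $\F_2^q$-Fourier/Chernoff convergence of $\sum_{i\in S_v}e_{j(i)}$ to the uniform distribution on $H_b$ at a quantitative rate in $|S_v|$; a secondary subtlety is turning a short odd support $|S_v|\in\{1,3,\dots,K\}$ into a usable label, which is precisely what per-edge consistency folding together with smoothness of SLC accomplishes.
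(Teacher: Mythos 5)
Your dictatorship test and its Fourier-analytic soundness are essentially identical to the paper's (the paper bounds the bias of $\bigoplus_{j\in S}Z_j$ by an $\exp(-2K/q)$ term via a two-step sampling, while you compute the Fourier coefficient $(1-2|T|/q)^{|S_v|}$ directly; these give the same $(1-2/q)^{|S_v|}$ decay), and your decoding via small $S_v$ plus smoothness is in the same spirit as the paper's Lemma~5.3. The genuinely different ingredient is how you enforce the $\F_2$-linear consistency constraints, and that is where there is a real gap.

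You encode the folding constraints $C[e,k]$ and the per-vertex parity conditions as \emph{bags} of size one with weight $\alpha$, and claim that any parity attaining a $q/2^{q-1}+\delta$ fraction of all bags must satisfy almost all of them. This cannot work, for the following concrete reason. Take any map $\rho:V\to[M]$, not necessarily satisfying any edges of $\mc{L}$, and the parity $\chi_S$ with $S_v=\{\rho(v)\}$ for every $v$. By the completeness calculation this parity satisfies \emph{every} dictator bag (on the $v$-block it is the dictator $x_{v,\rho(v)}$), and it also satisfies every per-vertex folding bag since $|S_v|=1$ is odd. For an edge-consistency bag $(e,k)$ with $e=(v_1,v_2)$, the constraint $|S_{v_1}\cap\pi_{ev_1}^{-1}(k)|\equiv|S_{v_2}\cap\pi_{ev_2}^{-1}(k)|\pmod 2$ reads $[\pi_{ev_1}(\rho(v_1))=k]=[\pi_{ev_2}(\rho(v_2))=k]$, which fails for at most two of the $N$ values of $k$. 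Hence $\chi_S$ satisfies at least a $1-2\alpha/N$ fraction of \emph{all} bags, which for the Smooth Label Cover parameters ($N=2^R7^{QR}$) is arbitrarily close to $1$ and in particular exceeds $q/2^{q-1}+\delta$ for every $q\ge 3$ and small $\delta$. This parity exists for any $\rho$ and in particular in the NO case, so your NO-case guarantee fails outright: the issue is not merely that you cannot choose $\alpha$, but that soft constraint bags of the form you propose are nearly trivially satisfiable regardless of whether $\rho$ is a good labeling.

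The paper avoids this by \emph{hard} folding: it defines $H\subset\F_2^{V\times[M]}$ as the subspace cut out by the homogeneous constraints $C[e,j]$, projects all bag points into $H$, and presents the instance in a coordinate system for $H$. In that presentation every candidate parity automatically has coefficient vector satisfying $\mc{C}$, so the consistency relations hold for all $(e,j)$ (not just most), and no reward is given for satisfying them. With hard folding, a parity with $|S_v|=1$ and an arbitrary $\rho$ is simply not a legal hypothesis unless $\rho$ satisfies \emph{every} edge. To repair your argument you would need to replace the soft folding bags by the quotient/subspace construction; once you do so, your per-vertex oddness bags also become unnecessary, since (as in the paper's Lemma~5.3) nonemptiness of $S_v$ for good $v$ together with smoothness and the hard constraints already yields a nonempty intersection $\pi_{ev_1}(S_{v_1})\cap\pi_{ev_2}(S_{v_2})$ to decode from.
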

We begin with the dictatorship test below using which the hardness reduction is described and analyzed in Sec. \ref{sec:hardness-parity}. 

\subsection{Dictatorship Test}
Consider a large $M \in \mathbb{Z}^+$, and the space of vectors $\F_2^M$. For some integer $q > 1$, let he dictatorship test distribution $\mc{D}^{\sf dict}_{M, q}$ on $(B, \sigma)$ be as follows:
\begin{enumerate}
    \item Choose $B = \left\{\bx^{(1)}, \dots, \bx^{(q)}\right\}$ where for each $i \in [M]$, $\left(x^{(1)}_i, \dots, x^{(q)}_i\right)$ is  sampled u.a.r. from $\left\{\mb{e}^{(j)}\right\}_{j=1}^q$ where $\mb{e}^{(j)} \in \F_2^q$ is the $j$th coordinate vector. 
    \item Output $(B, \sigma = 1/q)$.
\end{enumerate}
We prove the following lemma summarizing the completeness and soundness of t $\mc{D}^{\sf dict}_{M, q}$. 
\begin{lemma}\label{lem:parity_dict}
The distribution $\mc{D}^{\sf dict}_{M, q}$ satisfies the following properties:\\
\tn{Completeness:} For any $i \in [M]$ the parity function $h_i(\bx) := x_i$ has the property that $\sigma(B, h_i) = 1/q$ for any $B$ in the support of $\mc{D}^{\sf dict}_{M, q}$ i.e., $h_i$ satisfies all bags of $\mc{D}^{\sf dict}_{M, q}$. \\
\tn{Soundness:} Let $h(\bx) := c_0 \oplus \bigoplus_{i=1}^Mc_ix_i$ be such that $\left|\{i \in [M]\,\mid\, c_i = 1\}\right| = K \leq M$. Then, $h$ satisfies a random $(B, \sigma) \sim \mc{D}^{\sf dict}_{M, q}$ with probability at most $q/2^{q-1} + \exp\left(-2K/q + q/2\right)$. 
\end{lemma}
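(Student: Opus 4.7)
\textbf{Proof plan for Lemma \ref{lem:parity_dict}.}

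The plan is to handle the two parts separately. For completeness, the observation is essentially by construction: for each coordinate $i \in [M]$, the vector $(x^{(1)}_i, \ldots, x^{(q)}_i)$ is one of the standard basis vectors $\mb{e}^{(j)}$, so exactly one of $h_i(\bx^{(1)}), \ldots, h_i(\bx^{(q)})$ equals $1$. Hence $\sigma(B, h_i) = 1/q$ for every $B$ in the support, so the target label proportion is exactly met.

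For soundness, I will work with the aggregated vector in $\F_2^q$. Let $S = \{i \in [M] : c_i = 1\}$, so $|S| = K$, and for each $i \in S$ let $\mb{y}^{(i)} := (x_i^{(1)}, \ldots, x_i^{(q)}) \in \F_2^q$, which is uniformly distributed over $\{\mb{e}^{(1)}, \ldots, \mb{e}^{(q)}\}$ and independent across $i$. Then the tuple $(h(\bx^{(1)}), \ldots, h(\bx^{(q)})) = c_0 \mathbf{1} \oplus \mb{Y}$, where $\mb{Y} := \bigoplus_{i \in S} \mb{y}^{(i)}$. The bag is satisfied iff this tuple has Hamming weight exactly $1$, i.e.\ iff $\mb{Y} \in A := \{c_0 \mathbf{1} \oplus \mb{e}^{(j)} : j \in [q]\}$.

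The main step is a Fourier computation over $\F_2^q$. For any $\mathbf{s} \in \F_2^q$ of Hamming weight $w(\mathbf{s})$, one checks $\E[(-1)^{\mathbf{s} \cdot \mb{y}^{(i)}}] = (q - 2w(\mathbf{s}))/q$, so by independence $\E[(-1)^{\mathbf{s} \cdot \mb{Y}}] = (1 - 2w(\mathbf{s})/q)^K$. Expanding via the Fourier inversion $\Pr[\mb{Y} = \mb{y}] = 2^{-q} \sum_{\mathbf{s}} (-1)^{\mathbf{s} \cdot \mb{y}} \E[(-1)^{\mathbf{s} \cdot \mb{Y}}]$ and summing over $\mb{y} \in A$, the inner character sum becomes $(-1)^{c_0 w(\mathbf{s})} (q - 2w(\mathbf{s}))$, which gives
\[
\Pr[\mb{Y} \in A] \;=\; \frac{q}{2^q} \sum_{w=0}^{q} \binom{q}{w} (-1)^{c_0 w} \left(1 - \frac{2w}{q}\right)^{K+1}.
\]

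The final step is to bound this sum. The extreme weights $w \in \{0, q\}$ contribute at most $2 \cdot q/2^q = q/2^{q-1}$ in absolute value, accounting for the leading term of the claimed bound. For $1 \le w \le q-1$, the factor $|1 - 2w/q|$ is maximized at $w = 1$ (or $w = q-1$) and equals $1 - 2/q$. Using $\sum_{w=1}^{q-1}\binom{q}{w} \le 2^q$ and $(1 - 2/q)^{K+1} \le e^{-2K/q}$, the total contribution of these intermediate terms is at most $q \cdot e^{-2K/q}$. Finally, since $q \le e^{q/2}$ for $q \ge 1$, this is at most $\exp(-2K/q + q/2)$, yielding the stated bound. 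The only real subtlety is this last arithmetic absorption; the Fourier step itself is routine once the vector $\mb{Y}$ is identified as the right object.
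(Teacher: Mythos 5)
Your completeness argument matches the paper's. For soundness you take a genuinely different, and arguably cleaner, route. You set $\mb{Y}=\bigoplus_{i\in S}\mb{y}^{(i)}$, compute its Fourier coefficients over $\F_2^q$ directly (each of magnitude $|1-2w(\mathbf{s})/q|^{K}$), and then apply Fourier inversion to get an exact formula
\[
\Pr[\mb{Y}\in A]\;=\;\frac{q}{2^q}\sum_{w=0}^{q}\binom{q}{w}(-1)^{c_0 w}\Bigl(1-\frac{2w}{q}\Bigr)^{K+1},
\]
which you then bound termwise: the $w\in\{0,q\}$ terms give the $q/2^{q-1}$ main term, and the intermediate terms are at most $q\cdot e^{-2K/q}\le e^{q/2-2K/q}$ using $\sum_w\binom{q}{w}\le 2^q$ and $|1-2w/q|\le 1-2/q$ for $1\le w\le q-1$. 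The paper instead proceeds in two steps, in the spirit of the Vazirani XOR lemma: it first proves a separate bias estimate (its Lemma~\ref{lem:halfbiasparity}) stating that for every strict nonempty $S\subsetneq[q]$, $\Pr[\bigoplus_{j\in S}Z_j=0]$ is within $\tfrac12 e^{-2K/q}$ of $1/2$ --- derived via a ``lazy sampling'' trick rather than a direct Fourier coefficient evaluation --- and then combines Cauchy--Schwarz with Parseval to conclude that the distribution of $(Z_1,\dots,Z_q)$ is within $2^{q/2}e^{-2K/q}$ of the uniform distribution on the parity slice $\{\bz:\bigoplus_j z_j=\psi^*\}$, where the weight-$1$ indicator has expectation exactly $q/2^{q-1}$. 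Both proofs ultimately rest on the same Fourier coefficient bound; yours is more self-contained (no auxiliary lemma, no detour through a reference distribution) and gives a marginally sharper error factor ($q$ versus $2^{q/2}$), while the paper's formulation isolates the bias statement in reusable form. Your argument is correct and complete.
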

\begin{proof}
    The completeness follows from construction, since for any $(B, \sigma) \sim \mc{D}^{\sf dict}_{M, q}$, where $B = \left\{\bx^{(1)}, \dots, \bx^{(q)}\right\}$, $\left(h_i\left(x^{(j)}_i\right)\right)_{j=1}^q = \left(x^{(j)}_i\right)_{j=1}^q \in \left\{\mb{e}^{(j)}\right\}_{j=1}^q$ for all $i \in [M]$. Therefore, $\sigma(B, h_i) = 1/q = \sigma$, and therefore $h_i$ satisfies all the bags, for all $i \in [M]$.
    
    The proof of the soundness is given next in Sec. \ref{sec:proof_dict_soundness}.
\end{proof}
    
\subsubsection{Soundness of $\mc{D}^{\sf dict}_{M, q}$}\label{sec:proof_dict_soundness}
Let $I_h = \{i \in [M]\,\mid\, c_i = 1\}$ so that $|I_h| = K$. Letting $B = \left\{\bx^{(1)}, \dots, \bx^{(q)}\right\}$ be a random bag from $\mc{D}^{\sf dict}_{M, q}$, for convenience define the random variable $Z_j := h\left(\bx^{(j)}\right)$ for $j \in [q]$ and let $\mc{D}_Z$ be the distribution on $(Z_1, \dots, Z_q)$.
Since any $\left(B, \sigma\right)$ in the support of $\mc{D}^{\sf dict}_{M, q}$ has $\sigma = 1/q$, using the construction of $B = \left\{\bx^{(1)}, \dots, \bx^{(q)}\right\}$ we obtain
\begin{eqnarray}
    \bigoplus_{j=1}^q Z_j &  =  & \bigoplus_{j=1}^q h\left(\bx^{(j)}\right) = \bigoplus_{j=1}^q\left(c_0 \oplus \bigoplus_{i=1}^Mc_ix^{(j)}_i\right) \nonumber =  \bigoplus_{j=1}^q c_0 \oplus \bigoplus_{i\in I_h}\bigoplus_{j=1}^q x^{(j)}_i \nonumber \\ & = & \bigoplus_{j=1}^q c_0 \oplus \bigoplus_{i\in I_h} 1 = \bigoplus_{j=1}^q c_0 \oplus \bigoplus_{i=1}^K 1 = \psi^* \in \F_2  \label{eqn:parityofsumZj} 
\end{eqnarray}
Our goal is to show that $\mc{D}_Z$ is close to $\ol{\mc{D}}_q$ which we define to be the uniform distribution over the elements of $\F_2^q$ with parity $\psi^*$. Towards this we prove the following lemma, which shows that the distribution $\mc{D}_Z$ has low-bias.
\begin{lemma}\label{lem:halfbiasparity}
        Consider any strict non-empty subset $S \subsetneq [q]$, s.t. $1 \leq |S| = s \leq q$. Then, 
        \begin{equation*}
            \left|\Pr\left[\bigoplus_{j \in S} Z_j = 0\right] - \frac{1}{2}\right| \leq \frac{1}{2}\cdot\tn{exp}\left(-2K/q\right).
        \end{equation*}
\end{lemma}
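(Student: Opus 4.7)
My plan is to expand $\bigoplus_{j \in S} Z_j$ coordinate by coordinate, identify it as a shifted XOR of independent biased bits indexed by $I_h$, and then bound the bias by a standard Fourier/product calculation.

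First, I will substitute the definition of $Z_j = h(\bx^{(j)}) = c_0 \oplus \bigoplus_{i=1}^M c_i x^{(j)}_i$ and interchange the XORs to get
\[
\bigoplus_{j \in S} Z_j \;=\; (s \cdot c_0 \bmod 2) \;\oplus\; \bigoplus_{i \in I_h} Y_i, \qquad Y_i := \bigoplus_{j \in S} x^{(j)}_i.
\]
By construction of $\mc{D}^{\sf dict}_{M,q}$, for each $i$ the column $(x^{(1)}_i, \dots, x^{(q)}_i)$ is a uniformly chosen standard basis vector $\mb{e}^{(j^*)}$ with $j^* \in [q]$ uniform, so $Y_i = \mathbbm{1}[j^* \in S]$, giving $\Pr[Y_i = 1] = s/q$. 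Crucially, the columns are sampled independently across $i \in [M]$, hence $\{Y_i\}_{i \in I_h}$ are independent.

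Next I will compute the bias via the identity $\E[(-1)^{Y_i}] = 1 - 2s/q = (q-2s)/q$. By independence,
\[
\E\left[(-1)^{\bigoplus_{i \in I_h} Y_i}\right] \;=\; \left(\frac{q-2s}{q}\right)^{K},
\]
so $\Pr[\bigoplus_{i \in I_h} Y_i = 0] = \tfrac12 \bigl(1 + ((q-2s)/q)^K\bigr)$. Since XOR-ing by the deterministic constant $sc_0$ either preserves this probability or replaces it by its complement, in either case the deviation from $1/2$ equals $\tfrac12 \left|(q-2s)/q\right|^K$.

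The only remaining step is to bound $|q-2s|/q$ uniformly for $1 \leq s \leq q-1$. The function $s \mapsto |q - 2s|$ is maximized at the endpoints $s=1$ and $s=q-1$, each giving value $q-2$, so $|q-2s|/q \leq 1 - 2/q \leq e^{-2/q}$ by the elementary inequality $1-x \leq e^{-x}$. Raising to the $K$-th power yields $\tfrac12 |(q-2s)/q|^K \leq \tfrac12 e^{-2K/q}$, which is the claimed bound. I do not anticipate any serious obstacle; the main conceptual point is simply noticing that the $Y_i$'s are i.i.d.\ $\operatorname{Bernoulli}(s/q)$ random variables across $i \in I_h$, after which the rest is a textbook bias-times-product computation.
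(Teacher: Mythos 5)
Your proof is correct and reaches the same bound by a slightly different, arguably cleaner route. You compute the bias directly via the product identity
\[
\E\left[(-1)^{\bigoplus_{i\in I_h} Y_i}\right] \;=\; \prod_{i\in I_h}\E\left[(-1)^{Y_i}\right] \;=\; \left(\frac{q-2s}{q}\right)^{K},
\]
using independence of the $Y_i$ across $i\in I_h$, and then bound $|q-2s|/q \leq (q-2)/q \leq e^{-2/q}$ uniformly over $1 \leq s \leq q-1$. The paper instead first reduces to the case $s \leq q/2$ by passing to the complementary set $[q]\setminus S$ together with the deterministic parity of $\bigoplus_{j=1}^q Z_j$, and then applies a lazy-sampling argument: each $r_i$ is generated by a Bernoulli$(2s/q)$ coin deciding membership in a set $T$, with an independent fair coin for $i\in T$, so that $\bigoplus_{i\in I_h} r_i$ is exactly unbiased conditioned on $T\neq\emptyset$; the bias is then $\tfrac12\Pr[T=\emptyset]=\tfrac12(1-2s/q)^K$, and $s\geq 1$ finishes. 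Both derivations give the identical final bound $\tfrac12(1-2/q)^K \leq \tfrac12 e^{-2K/q}$. Your Fourier-product route is the standard Piling-Up Lemma computation, and it has the small advantage of not needing the reduction to $s\leq q/2$, since $|1-2s/q|$ stays bounded by $1-2/q$ over the whole range $\{1,\dots,q-1\}$; the paper's reduction is forced by the lazy-sampling step, which requires $2s/q\leq 1$.
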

\begin{proof}
    First, we may assume that $s \leq q/2$, otherwise we can use $[q]\setminus S$ along with \eqref{eqn:parityofsumZj} to complete the argument.
    Analogous to \eqref{eqn:parityofsumZj} we have that
    \begin{equation}
        \bigoplus_{j\in S} Z_j = \bigoplus_{j\in S} h\left(\bx^{(j)}\right) = \bigoplus_{j=1}^{|S|} c_0 \oplus \bigoplus_{i \in I_h}\bigoplus_{j \in S} x^{(j)}_i = \psi_{S,c_0} \oplus \bigoplus_{i \in I_h} r_i \label{eqn:sumofri}
    \end{equation}
    where $\psi_{S,c_0} = \bigoplus_{j=1}^{|S|} c_0$ is a constant and $r_i := \bigoplus_{j \in S} x^{(j)}_i$. From the construction of the random bag $B$, we have that $\{r_i\}_{i\in I_h}$ are iid $\F_2$-valued random variables such that $\Pr[r_i = 1] = s/q, \forall i \in I_h$. In other words, the RHS of \eqref{eqn:sumofri} denotes the parity of $K$ such iid random variables. To analyze this, let us consider an alternate way of sampling $\{r_i\}_{i=1}^M$: 
    \begin{enumerate}
        \item Sample $T \subseteq I_h$ by including each $i \in I_h$ into $T$ independently with probability $2s/q \leq 1$. 
        \item For each $i \in I_h\setminus T$, set $r_i = 0$. Independently for each $i \in T$, set $r_i = 1$ w.p. $1/2$ and to $0$ otherwise.
    \end{enumerate}
    It is easy to see that conditioned on $T \neq \emptyset$, $\oplus_{i \in I_h}r_i$ is unbiased. This, along with \eqref{eqn:sumofri} leads us to,
    \begin{equation}
        \Pr\left[\bigoplus_{j \in S} Z_j =  \psi_{S,c_0}\right] = \frac{1}{2}\cdot \left(1 - \Pr[T = \emptyset]\right) + p\cdot\Pr[T = \emptyset] = \frac{1}{2} + \Pr[T = \emptyset]\left(p - \frac{1}{2}\right) \label{eqn:pminushalf}
    \end{equation}
    where $p \in [0,1]$ is some probability. Further, 
    \begin{equation*}
        \Pr[T = \emptyset] = \left(1 - \frac{2s}{q}\right)^K 
        %= \left(1 - \frac{2s}{q}\right)^{\frac{q}{2s} \cdot \frac{2Ks}{q}} 
        \leq \tn{exp}(-2Ks/q)
    \end{equation*}
     Since $s \geq 1$ and $|p-1/2| \leq 1/2$, the above along with \eqref{eqn:pminushalf} completes the proof.
\end{proof}

 The rest of the argument is similar to the Vazirani XOR Lemma except we need to show closeness to $\ol{\mc{D}}_q$ rather than the uniform distribution. We now transition to Fourier analysis of $\{0,1\}$-valued functions over $\{-1,1\}^q$. For this purpose, we shall map $\F_2$ to $\{-1,1\}$ via $b \mapsto (-1)^b$ and think of $\mc{D}_Z$ and $\ol{\mc{D}}_q$ as distributions over $\{-1,1\}^q$. First, from the definitions of $\mc{D}_Z$ and $\ol{\mc{D}}_q$, $\chi_{[q]}(\bz) = (-1)^{\psi^*}$ for any $\bz$ in the support of $\mc{D}_Z$ or $\ol{\mc{D}}_q$. Thus,
\begin{equation}
    \E_{\bz \leftarrow \ol{\mc{D}}_q}\left[\chi_{[q]}(\bz)\right] = \E_{\bz \leftarrow \mc{D}_Z}\left[\chi_{[q]}(\bz)\right] = \left(-1\right)^{\psi^*} \label{eqn:fullsetsame}
\end{equation}
and it is also easy to observe that for any $S \subsetneq [q]$,
\begin{equation}
    \E_{\bz \leftarrow \ol{\mc{D}}_q}\left[\chi_S(\bz)\right] = 0 \quad \tn{and} \quad  \left|\E_{\bz \leftarrow \mc{D}_Z}\left[\chi_S(\bz)\right]\right| \leq  \tn{exp}\left(-2K/q\right),
    \label{eqn:smallerror_uniformzero}
\end{equation}
where the upper bound follows from Lemma \ref{lem:halfbiasparity}.
Consider any function $f :\{-1,1\}^q \to [0,1]$ having Fourier expansion $\sum_{S \subseteq [q]}\hat{f}_S\chi_S$. Using \eqref{eqn:fullsetsame}, and \eqref{eqn:smallerror_uniformzero} we obtain
\begin{eqnarray*}
    \left|\E_{\ol{\mc{D}}_q}[f] - \E_{\mc{D}_Z}[f]\right| & \leq & \tn{exp}\left(-2K/q\right)\sum_{S\subsetneq [q]}\left|\hat{f}_S\right| \\ & \leq & 2^{q/2}\cdot \tn{exp}\left(-2K/q\right) \sqrt{\sum_{S\subsetneq [q]}\hat{f}_S^2} \,\leq\, 2^{q/2}\cdot \tn{exp}\left(-2K/q\right),
\end{eqnarray*}
where we use Cauchy-Schwarz and Parseval's bound. We can take $f$ to be indicator function of the event that exactly one of the coordinates is $-1$. This function evaluates to $1$ on $\ol{\mc{D}}_q$ with probability exactly $q/2^{q-1}$. Using this along with the above bound completes the proof.

\subsection{Hardness Reduction} \label{sec:hardness-parity}
Our hardness reduction is from an instance $\mc{L}$ of \slc given in Theorem \ref{thm:slc-hardness}.  

\medskip \noindent {\bf Points, bags and label proportions.} The initial set of points is defined in the space $\F_2^{V\times [M]}$. For a point $\hat{\bx} \in \F_2^{V\times [M]}$ let $\hat{\bx}[v] = (\hat{x}_{v,1}, \dots, \hat{x}_{v,M})$ be the vector of $M$ coordinates corresponding to $v \in V$. Let $\mc{D}_{\mc{B}}$ be the distribution on bags and label proportions given by the following process.
\begin{enumerate}
    \item Sample $v \in V$ u.a.r.
    \item Sample $(B = \{\bx^{(1)}, \dots, \bx^{(q)}\}, 1/q) \leftarrow \mc{D}^{\sf dict}_{M, q}$.
    \item For $j \in [q]$: define $\hat{\bx}^{(j)} \in \F_2^{V\times [M]}$ by letting $\hat{\bx}^{(j)}[v] = \bx^{(j)}[v]$ and for all $v' \neq v$, $\hat{\bx}^{(j)}[v'] = \mb{0}$.
    \item Output $(\hat{B} = \{\hat{\bx}^{(1)}, \dots, \hat{\bx}^{(q)}\}, 1/q)$
\end{enumerate}

\medskip
\noindent
{\bf Folding and projected point-set.} For each $e = (v_1, v_2) \in E$ and $j \in [N]$ define the linear constraint $C[e,j]$ over point $\hat{\bx} \in \F_2^{V\times [M]}$ as
\begin{equation}
    C[e,j] \Leftrightarrow \bigoplus_{i \in \pi_{ev_1}^{-1}(j)}\hat{x}_{v_1i} = \bigoplus_{i \in \pi_{ev_2}^{-1}(j)}\hat{x}_{v_2i}. \label{eqn:Cej}
\end{equation}
Let $H \subset \F_2^{V\times [M]}$ be the subspace of all the points which satisfy the set of homogeneous linear constraints $\mc{C} := \{C[e,j]\,\mid\, e \in E, j \in [N]\}$.
We let $H$ be the space in which our final instance resides by linearly projecting all points $\hat{\bx}$ created in the support of  $\mc{D}_{\mc{B}}$ into points $\ol{\bx} \in H$. Since our final instance is represented in a coordinate system corresponding to a linear basis for $H$, this also forces any solution $h$ to be represented in a basis for $H$. In particular, $h$ represented in the original space by $h(\hat{\bx}) := c_0 \oplus \langle \mb{c}, \hat{\bx}\rangle$ (where the inner product is over $\F_2$) must obey $\mb{c} \in H$.\vg{Should we justify the previous sentence?}\rs{Done.} Let $\ol{\mc{D}}_{\mc{B}}$ be the new distribution on the bags $(\ol{B}, 1/q)$ given by the linear projection of all the points in the bags of $\mc{D}_{\mc{B}}$ on to $H$. 

\subsubsection{YES Case}
In this case, there is a labeling $\rho : V \to [M]$ which satisfies all the edges of $\mc{L}$. Consider over $\F_2^{V\times [M]}$ the parity $h^*(\hat{\bx}) = \bigoplus_{v\in V}\hat{x}_{v,\rho(v)} =: \langle \mb{c}^*, \hat{\bx}\rangle$. Since, for any edge $e = (v_1, v_2)$, $\pi_{ev_1}(\rho(v_1)) = \pi_{ev_2}(\rho(v_2))$, $\mb{c}^* \in H$.
Now fix a choice of $v$ in Step 1 of the distribution $\mc{D}_{\mc{B}}$. Restricted to the coordinates corresponding to $v$ (since the others are set to $0$), $h^*$ is simply $x_{v,\rho(v)}$. We can now directly apply the completeness property of $\mc{D}^{\sf dict}_{M, q}$ in Lemma \ref{lem:parity_dict} to obtain that $h^*$ satisfies all the bags  given the choice $v$. Since this holds for all choices of $v$, $h^*$ satisfies all the bags of $\mc{D}_{\mc{B}}$.

\subsubsection{NO Case}
Assume for a contradiction that there is a parity in the space $H$ that satisfies $(q/2^{q-1} + \delta)$-fraction of the bags of $\mc{D}_{\mc{B}}$. This parity can be written as  
\begin{equation}
    h(\hat{\bx}) = c_0 \oplus \bigoplus_{v \in V}\bigoplus_{i=1}^Mc_i\hat{x}_i = c_0 \oplus \bigoplus_{v \in V}\langle \bc[v], \hat{\bx}[v]\rangle,
\end{equation}
where $\bc$ satisfies the constraints $\mc{C}$. By averaging there are $\delta/2$ fraction of \emph{good} $v \in V$ such that $h$ satisfies $(q/2^{q-1} + \delta/2)$-fraction of the bags of $\mc{D}_{\mc{B}}\mid_v$ i.e., $\mc{D}_{\mc{B}}$ given $v$ is chosen in Step 1. By the weak-expansion property of $\mc{L}$ in Theorem \ref{thm:slc-hardness}, the subset of edges $E'$ induced by the good vertices satisfies $|E'| \geq (\delta/2)^2|E|$. 
Let $S_v := \left\{i \in [M]\,\mid\, c_{v,i} = 1\right\}$.
From the soundness  of $\mc{D}^{\sf dict}_{M,q}$ (Lemma \ref{lem:parity_dict}), we obtain that all good $v$ satisfy
\begin{equation}
    \left|S_v\right| \leq \Delta :=  q(\log(2/\delta)+q/2)/2  \ . \label{eqn:Deltabd}
\end{equation}
The smoothness of $\mc{L}$ implies that for any good $v \in V$, $\Pr_{e\sim v}\left[\pi_{ev}\left(S_v\right) = \left|S_v\right|\right] \geq 1 - \left|S_v\right|^2/(2Q) \geq 1 - \Delta^2/(2Q)$. Let 
$E^* = \left\{e = (v_1, v_2) \in E'\,\mid\, \pi_{ev_r}\left(S_{v_r}\right) = \left|S_{v_r}\right|, r = 1,2\right\}$. 
Then
\begin{equation}
    \frac{\left|E^*\right|}{\left|E\right|} \geq \zeta := \frac{\delta^2}{4} - \frac{\Delta^2}{Q}. \label{eqn:estar}
\end{equation}
We have the following lemma.
\begin{lemma} \label{lem:intersectSv}
    For any $e = (v_1, v_2) \in E^*$,  $\pi_{ev_1}\left(S_{v_1}\right)\cap \pi_{ev_2}\left(S_{v_2}\right) \neq \emptyset$.
\end{lemma}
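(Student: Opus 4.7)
The plan is to combine the folding constraints satisfied by $\bc \in H$ with the injectivity of $\pi_{ev_r}$ on $S_{v_r}$ (the defining feature of $E^*$) to prove the stronger statement $\pi_{ev_1}(S_{v_1}) = \pi_{ev_2}(S_{v_2})$, and then separately to rule out the degenerate possibility that both sides are empty.

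For the main step, I would plug $\bc$ into $C[e,j]$ from \eqref{eqn:Cej}: each side of the resulting $\F_2$-equation is the parity of $|S_{v_r} \cap \pi_{ev_r}^{-1}(j)|$, by the definition of $S_{v_r}$. Because $e \in E^*$ forces $\pi_{ev_r}$ to be injective on $S_{v_r}$, this cardinality lies in $\{0,1\}$ and equals $1$ iff $j \in \pi_{ev_r}(S_{v_r})$. Thus $C[e,j]$ is equivalent to $\mathbf{1}[j \in \pi_{ev_1}(S_{v_1})] = \mathbf{1}[j \in \pi_{ev_2}(S_{v_2})]$, and letting $j$ range over $[N]$ yields $\pi_{ev_1}(S_{v_1}) = \pi_{ev_2}(S_{v_2})$ as subsets of $[N]$.

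For non-emptiness, I would use that $v_1$ is good (since $E^* \subseteq E'$): $h$ satisfies at least $q/2^{q-1} + \delta/2 > 0$ fraction of bags in $\mc{D}_{\mc{B}}\mid_{v_1}$. If $S_{v_1}$ were empty, then $\bc[v_1] = \mb{0}$, and since $\bc \in H$ ensures $h$'s value is preserved under projection to $H$, $h$ would restrict to the constant $c_0$ on all $q$ points of every bag from $\mc{D}_{\mc{B}}\mid_{v_1}$. Because $1/q \notin \{0,1\}$ for $q \geq 2$, no such bag could then be satisfied, a contradiction. Hence $S_{v_1} \neq \emptyset$, and combined with the set equality above this gives the claimed non-empty intersection. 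The argument is essentially folding plus bookkeeping; the only mild subtlety is the implicit use of $\bc \in H$ to equate parity values on pre- and post-projection points, which is precisely what the folding construction guarantees.
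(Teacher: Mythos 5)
Your proof is correct and takes essentially the same route as the paper: both combine the folding constraints $C[e,j]$ applied to $\bc$ with the injectivity of $\pi_{ev_r}$ on $S_{v_r}$ (which is exactly what $E^*$ enforces), and both deduce $S_{v_1} \neq \emptyset$ from the fact that a good $v_1$ has $h$ satisfying some bag of label proportion $1/q \notin \{0,1\}$. The only cosmetic difference is quantifier order --- the paper picks one $j \in \pi_{ev_1}(S_{v_1})$ and applies $C[e,j]$ once to land a point in the intersection, whereas you apply $C[e,j]$ for all $j \in [N]$ to derive the slightly stronger set equality $\pi_{ev_1}(S_{v_1}) = \pi_{ev_2}(S_{v_2})$ before invoking non-emptiness.
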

\begin{proof}
    Since $h$ satisfies at least one bag of $\mc{D}\mid_{v_1}$, $\bc[v] \neq \mb{0}$, and thus $S_{v_1}\neq \emptyset$. Consider any $j \in \pi_{ev_1}\left(S_{v_1}\right)$. From the definition of $E^*$, $\left|\pi_{ev_1}^{-1}(j) \cap S_{v_1}\right| = 1$. Thus, $\bigoplus_{i \in \pi_{ev_1}^{-1}(j)}\hat{x}_{v_1i} = 1$ and from \eqref{eqn:Cej} and the fact that $\hat{\bc}$ satisfies $ C[e,j]$ we obtain that $\pi_{ev_2}^{-1}(j) \cap S_{v_2} \neq \emptyset$.
\end{proof}
Let $\rho$ be the randomized labeling to the good vertices given by randomly assigning each good $v \in V$ a label chosen u.a.r. from $S_v$. From Lem. \ref{lem:intersectSv}, \eqref{eqn:Deltabd} and \eqref{eqn:estar} we obtain that $\rho$ satisfies in expectation at least, $\nu := \zeta/\Delta^2$ fraction of the edges of $\mc{L}$. By choosing the parameter $Q$ in Theorem \ref{thm:slc-hardness} to be large enough we can take $\zeta \geq \delta^2/8$ and then taking taking the parameter $R$ to be be large enough we obtain a contradiction. 

\section{Approximately LLP Learning Parities}\label{sec:algoparities} 
\noindent
We restate Theorem \ref{thm:main-algo} which is proved in the section.
\begin{theorem}
    For any $q \geq 2$, given $\mc{B} := \{(B_k, \sigma_k)\}_{k=1}^m$ as an instance of \llpparity$[q]$ over $\F_2^d$ such that there is an (unknown) parity  that satisfies all the bags of $\mc{B}$, there is a randomized polynomial time algorithm that satisfies $(1/2^{q-2})$-fraction of the bags of $\mc{B}$ in expectation. 
\end{theorem}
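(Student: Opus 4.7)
The plan is to reduce each bag to $\F_2$-linear constraints on the coefficients of the unknown parity and then sample uniformly from the affine feasible set. Write a candidate parity as $h(\bx)=c_0\oplus\langle\bc,\bx\rangle$ with $(c_0,\bc)\in\F_2^{d+1}$. For each bag $(B_k,\sigma_k)$ with $\sigma_k\in\{0,1\}$ impose the $|B_k|$ pointwise equations $h(\bx)=\sigma_k$ for all $\bx\in B_k$; for each bag with $\sigma_k\in(0,1)$ impose only the single XOR-parity equation $\bigoplus_{\bx\in B_k}h(\bx)\equiv |B_k|\sigma_k\pmod{2}$. Let $V\subseteq\F_2^{d+1}$ be the affine set of solutions. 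Since the hypothesized consistent parity $h^*$ satisfies every bag exactly, $V\ni h^*$ is non-empty, and Gaussian elimination yields a description $V=\bar p+W$ (with $W$ the null space) in polynomial time. The algorithm outputs a uniformly random $h\in V$, obtained by fixing $\bar p$ and drawing independent uniform $\F_2$ coefficients against a basis of $W$.

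For the analysis, every bag with $\sigma_k\in\{0,1\}$ is satisfied by every $h\in V$. Fix a bag with $\sigma_k\in(0,1)$ and set $q_k:=|B_k|\in\{2,\ldots,q\}$ and $r_k:=q_k\sigma_k\in\{1,\ldots,q_k-1\}$. The $\F_2$-affine evaluation map $\Phi_k(h):=(h(\bx))_{\bx\in B_k}$ is a surjection of affine spaces with equal-sized fibres, so it pushes the uniform distribution on $V$ to the uniform distribution on $V_k:=\Phi_k(V)\subseteq\F_2^{q_k}$. By construction $V_k$ is contained in the codimension-one hyperplane of vectors with parity $r_k\bmod 2$, and $\by^*_k:=\Phi_k(h^*)\in V_k$ has Hamming weight exactly $r_k$.

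The key claim is then that a uniform element of $V_k$ has weight exactly $r_k$ with probability at least $1/2^{q_k-2}\geq 1/2^{q-2}$, which I would prove by a dimension dichotomy on $V_k$. If $\dim V_k=q_k-1$, then $V_k$ is the entire parity hyperplane and the fraction of weight-$r_k$ vectors is $\binom{q_k}{r_k}/2^{q_k-1}\geq q_k/2^{q_k-1}\geq 2/2^{q_k-1}=1/2^{q_k-2}$, using $\binom{q_k}{r_k}\geq q_k$ for $r_k\in\{1,\ldots,q_k-1\}$ and $q_k\geq 2$. Otherwise $\dim V_k\leq q_k-2$, so $|V_k|\leq 2^{q_k-2}$, and the single known element $\by^*_k\in V_k$ alone contributes probability $1/|V_k|\geq 1/2^{q_k-2}$. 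Summing per-bag satisfaction probabilities (each at least $1/2^{q-2}$) over the $m$ bags yields the claimed expected $(1/2^{q-2})$-fraction.

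I do not foresee a genuine technical obstacle. The only point that needs a little care is the above dichotomy, which rests on the observation that the XOR-parity constraint confines $V_k$ to a codimension-one hyperplane: either $V_k$ fills that hyperplane (and the binomial count $\binom{q_k}{r_k}\geq q_k\geq 2$ wins), or $V_k$ is strictly smaller (and then $\by^*_k$ alone already has mass at least $1/2^{q_k-2}$).
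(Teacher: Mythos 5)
Your proposal is correct and follows essentially the same route as the paper: pointwise linear constraints for the monochromatic (label-proportion $0$ or $1$) bags, a single XOR-parity constraint per remaining bag, Gaussian elimination to the feasible affine set, then a uniformly random solution. Your ``dimension dichotomy'' on $V_k$ (full parity hyperplane versus codimension $\geq 2$) is equivalent to the paper's case split on whether $F_B$ contains every Hamming-weight-$t\sigma$ vector, since those vectors affinely span the parity hyperplane; the bounds $\binom{q_k}{r_k}/2^{q_k-1}\geq q_k/2^{q_k-1}$ and $1/|V_k|\geq 1/2^{q_k-2}$ match the paper's $t/2^{t-1}$ and $1/2^{t-2}$.
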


Let us first define the subsets of bags $\mc{B}_a := \{(B, \sigma) \in \mc{B}\,\mid\, \sigma = a\}$ for $a \in \{0,1\}$. We call the bags in $\mc{B}_0\cup \mc{B}_1$ as \emph{monochromatic} since we know that the vectors in any such bag are either all labeled $0$ or all labeled $1$. Therefore, one can write a (possibly non-homogeneous) $\F_2$-linear constraint (in the coefficients of the parity) for each vector in any monochromatic bag. Further, since the label proportion of each bag is given, the parity of labels in each bag is also determined. Thus, we can add these $\F_2$-linear constraints capturing the parity of the labels for each bag.

Since this system of linear equations is feasible (due to the existence of the satisfying parity) one can do Gaussian elimination to obtain a reduced instance $\mc{B}'$ of \llpparity$[q]$ which satisfies the following properties:
\begin{enumerate}
    \item $\mc{B}'$ has no monochromatic bags.
    \item A subset of the coefficients may be eliminated or  assigned a fixed value $\in \F_2$, and the rest are \emph{free}. 
    \item For any bag $(B, \sigma) \in \mc{B}'$, any assignment to the \emph{free} coefficients yields a labeling which satisfies the parity constraint of that bag. In particular, the set of such assignments yields a (possibly affine) subspace of labelings of size at most $2^{t-1}$ where $t = |B|$. Let us call this subspace of labelings as $F_B$. 
\end{enumerate}

The algorithm outputs a random parity given by a random assignment to the free coefficients. To analyze its performance, let us consider a bag $(B, \sigma) \in \mc{B}'$ where $|B| = t \leq q$ and $t\sigma \in \{1,\dots, t-1\}$ since $\mc{B}'$ has no monochromatic bags.
By feasibility there exists a vector in $F_B$ which has Hamming weight $t\sigma$. The probability that the bag will be satisfied by a random parity is precisely the probability that a random point in $F_B$ has Hamming weight $t\sigma$. There are two cases:

\begin{enumerate}
    \item \emph{$F_B$ contains all vectors of Hamming weight $t\sigma$}. Since $t\sigma \in \{1,\dots, t-1\}$ the number of such vectors is at least $t$. Since $|F| \leq 2^{t-1}$, the probability that the bag is satisfied is at least $t/2^{t-1} \geq  1/2^{q-2}$ for any positive integers $t \leq q$ and $q \geq 2$.
    \item \emph{$F_B$ does not contains all vectors of Hamming weight $t\sigma$}. In this case,  $F_B$ is at most $(t-2)$-dimensional and thus $|F_B| \leq 2^{t-2}$. Since $F_B$ does contain one vector of Hamming weight $t\sigma$, the probability that the bag is satisfied is at least $1/2^{t-2} \geq 1/2^{q-2}$.
\end{enumerate}

\bibliographystyle{abbrv}

\bibliography{refsLLPbool}

\begin{thebibliography}{10}

\bibitem{ABSS97}
S.~Arora, L.~Babai, J.~Stern, and Z.~Sweedyk.
\newblock The hardness of approximate optima in lattices, codes, and systems of
  linear equations.
\newblock {\em J. Comput. Syst. Sci.}, 54(2):317--331, 1997.

\bibitem{ALMSS}
S.~Arora, C.~Lund, R.~Motwani, M.~Sudan, and M.~Szegedy.
\newblock Proof verification and the hardness of approximation problems.
\newblock {\em J. {ACM}}, 45(3):501--555, 1998.

\bibitem{AS}
S.~Arora and S.~Safra.
\newblock Probabilistic checking of proofs: {A} new characterization of {NP}.
\newblock {\em J. {ACM}}, 45(1):70--122, 1998.

\bibitem{BK21}
D.~Barucic and J.~Kybic.
\newblock Fast learning from label proportions with small bags.
\newblock {\em CoRR}, abs/2110.03426, 2021.

\bibitem{Bortsova18}
G.~Bortsova, F.~Dubost, S.~N. {\O}rting, I.~Katramados, L.~Hogeweg, L.~H.
  Thomsen, M.~M.~W. Wille, and M.~de~Bruijne.
\newblock Deep learning from label proportions for emphysema quantification.
\newblock In {\em {MICCAI}}, volume 11071 of {\em Lecture Notes in Computer
  Science}, pages 768--776. Springer, 2018.

\bibitem{busafekete2023easy}
R.~I. Busa-Fekete, H.~Choi, T.~Dick, C.~Gentile, and A.~M. Medina.
\newblock Easy learning from label proportions.
\newblock {\em arXiv}, 2023.

\bibitem{chen2023learning}
L.~Chen, T.~Fu, A.~Karbasi, and V.~Mirrokni.
\newblock Learning from aggregated data: Curated bags versus random bags.
\newblock {\em arXiv}, 2023.

\bibitem{CHR}
L.~Chen, Z.~Huang, and R.~Ramakrishnan.
\newblock Cost-based labeling of groups of mass spectra.
\newblock In {\em Proc. ACM SIGMOD International Conference on Management of
  Data}, pages 167--178, 2004.

\bibitem{DNRS}
L.~M. Dery, B.~Nachman, F.~Rubbo, and A.~Schwartzman.
\newblock Weakly supervised classification in high energy physics.
\newblock {\em Journal of High Energy Physics}, 2017(5):1--11, 2017.

\bibitem{FGRW12}
V.~Feldman, V.~Guruswami, P.~Raghavendra, and Y.~Wu.
\newblock Agnostic learning of monomials by halfspaces is hard.
\newblock {\em {SIAM} J. Comput.}, 41(6):1558--1590, 2012.

\bibitem{GS19}
S.~Ghoshal and R.~Saket.
\newblock Hardness of learning {DNFs} using halfspaces.
\newblock In {\em Proc. STOC}, pages 467--480, 2021.

\bibitem{GRSW}
V.~Guruswami, P.~Raghavendra, R.~Saket, and Y.~Wu.
\newblock Bypassing {UGC} from some optimal geometric inapproximability
  results.
\newblock {\em {ACM} Trans. Algorithms}, 12(1):6:1--6:25, 2016.

\bibitem{Hastad}
J.~H{\aa}stad.
\newblock Some optimal inapproximability results.
\newblock {\em J. {ACM}}, 48(4):798--859, 2001.

\bibitem{hernandez2018}
J.~Hernández-González, I.~Inza, L.~Crisol-Ortíz, M.~A. Guembe, M.~J.
  Iñarra, and J.~A. Lozano.
\newblock Fitting the data from embryo implantation prediction: Learning from
  label proportions.
\newblock {\em Statistical methods in medical research}, 27(4):1056--1066,
  2018.

\bibitem{KS08b}
S.~Khot and R.~Saket.
\newblock Hardness of minimizing and learning {DNF} expressions.
\newblock In {\em Proc. FOCS}, pages 231--240, 2008.

\bibitem{o2022challenges}
C.~O'Brien, A.~Thiagarajan, S.~Das, R.~Barreto, C.~Verma, T.~Hsu, J.~Neufeld,
  and J.~J. Hunt.
\newblock Challenges and approaches to privacy preserving post-click conversion
  prediction.
\newblock {\em CoRR}, abs/2201.12666, 2022.

\bibitem{Orting16}
S.~{\O}rting, J.~Petersen, M.~Wille, L.~Thomsen, and M.~{de Bruijne}.
\newblock Quantifying emphysema extent from weakly labeled ct scans of the
  lungs using label proportions learning.
\newblock In {\em The Sixth International Workshop on Pulmonary Image
  Analysis}, pages 31--42, 2016.

\bibitem{odonnell}
R.~O’Donnell.
\newblock {\em Analysis of boolean functions}.
\newblock Cambridge University Press, 2014.

\bibitem{Raz}
R.~Raz.
\newblock A parallel repetition theorem.
\newblock {\em {SIAM} J. Comput.}, 27(3):763--803, 1998.

\bibitem{R10}
S.~Rueping.
\newblock {SVM} classifier estimation from group probabilities.
\newblock In {\em Proc. ICML}, pages 911--918, 2010.

\bibitem{Saket21}
R.~Saket.
\newblock Learnability of linear thresholds from label proportions.
\newblock In {\em Proc. NeurIPS}, 2021.

\bibitem{Saket22}
R.~Saket.
\newblock Algorithms and hardness for learning linear thresholds from label
  proportions.
\newblock In {\em Proc. NeurIPS}, 2022.

\bibitem{Valiant}
L.~G. Valiant.
\newblock A theory of the learnable.
\newblock {\em Commun. {ACM}}, 27(11):1134--1142, 1984.

\bibitem{WIBB}
J.~Wojtusiak, K.~Irvin, A.~Birerdinc, and A.~V. Baranova.
\newblock Using published medical results and non-homogenous data in rule
  learning.
\newblock In {\em Proc. International Conference on Machine Learning and
  Applications and Workshops}, volume~2, pages 84--89. IEEE, 2011.

\bibitem{YCKJC14}
F.~X. Yu, K.~Choromanski, S.~Kumar, T.~Jebara, and S.~F. Chang.
\newblock On learning from label proportions.
\newblock {\em CoRR}, abs/1402.5902, 2014.

\end{thebibliography}

\end{document}